\newif\ifextended%
\extendedfalse%

\PassOptionsToPackage{x11names}{xcolor}
\ifextended
\documentclass[sigconf,authorversion,screen]{acmart}
\else
\documentclass[sigconf,screen]{acmart}
\fi

\usepackage[utf8]{inputenc}

\usepackage{booktabs,multirow}
\usepackage{comment}
\usepackage{url}
\usepackage[caption=false,font=footnotesize]{subfig}

\usepackage[colorinlistoftodos,prependcaption]{todonotes}
\usepackage{xspace}

\usepackage{bm}

\usepackage[nameinlink]{cleveref}

\usepackage{nameref}

\usepackage{enumitem}
\setlist{leftmargin=*,nosep}

\newtheorem{theorem}{Theorem}[section]
\newtheorem*{theorem*}{Theorem}
\newtheorem{lemma}[theorem]{Lemma}

\newtheorem{remark}[theorem]{Remark}

\newtheorem{problem}[theorem]{Problem}
\newtheorem{observation}[theorem]{Observation}
\newtheorem{definition}[theorem]{Definition}

\crefname{theorem}{Theorem}{Theorems}
\crefname{lemma}{Lemma}{Lemmas}
\crefname{claim}{Claim}{Claims}
\crefname{problem}{Problem}{Problems}
\crefname{remark}{Remark}{Remarks}
\crefname{observation}{Observation}{Observations}
\crefname{corollary}{Corollary}{Corollaries}
\crefname{appendix}{Appendix}{Appendices}
\crefformat{section}{#2\S#1#3}
\crefname{equation}{Eq.}{Eqs.}
\crefname{algorithm}{Algorithm}{Algorithms}
\crefname{figure}{Figure}{Figures}
\crefname{table}{Table}{Tables}

\usepackage{algorithmicx}
\usepackage[ruled]{algorithm}
\usepackage[noend]{algpseudocode}
\algnewcommand\And{\; \textbf{and} \;}
\algnewcommand\Or{\; \textbf{or} \;}
\algnewcommand\To{\; \textbf{to} \;}
\algnewcommand\Continue{\textbf{continue}}
\algnewcommand\Not{\textbf{not}}

\algnewcommand{\algalign}[1]{\parbox[t]{\dimexpr\linewidth-\algorithmicindent}{#1\strut}}
\algrenewcommand\textproc{\textsl}

\newcommand{\rme}{\mathrm{e}}


\newcommand{\calS}{\mathcal{S}}


\newcommand{\bbN}{\mathbb{N}}

\newcommand{\bbR}{\mathbb{R}}

\newcommand{\cP}{\textup{\textbf{P}}\xspace}
\newcommand{\NP}{\textup{\textbf{NP}}\xspace}
\newcommand{\shP}{\textbf{$\sharp$P}\xspace}
\newcommand{\PSPACE}{\textup{\textbf{PSPACE}}\xspace}
\newcommand{\NPSPACE}{\textup{\textbf{NPSPACE}}\xspace}

\DeclareMathOperator*{\argmax}{arg\,max}\renewcommand{\vec}[1]{\mathbf{\bm{#1}}}
\DeclareMathOperator{\Inf}{\mathsf{Inf}}

\newcommand{\mat}[1]{\mathbf{\bm{#1}}}
\newcommand{\prb}[1]{\textcolor{Firebrick4}{\textup{\textbf{\textsf{#1}}}}\xspace}

\newcommand{\bigO}{\mathcal{O}}
\newcommand{\USM}{\prb{Unconstrained Submodular Maximization}}
\newcommand{\MSM}{\prb{Monotone Submodular Maximization}}
\newcommand{\MSR}{\prb{MSReco}}
\newcommand{\USR}{\prb{USReco}}
\newcommand{\USRTAR}{\prb{USReco\textsf{[tar]}}}
\newcommand{\USRTJAR}{\prb{USReco\textsf{[tjar]}}}
\newcommand{\MMSR}{\prb{MaxMSReco}}
\newcommand{\MUSR}{\prb{MaxUSReco}}
\newcommand{\MUSRTAR}{\prb{MaxUSReco\textsf{[tar]}}}
\newcommand{\MUSRTJAR}{\prb{MaxUSReco\textsf{[tjar]}}}
\newcommand{\SATR}{\prb{3-SAT Reconfiguration}}
\newcommand{\MVCR}{\prb{Minimum Vertex Cover Reconfiguration}}

\newcommand{\TJ}{\textsf{tj}\xspace}
\newcommand{\TAR}{\textsf{tar}\xspace}
\newcommand{\TJAR}{\textsf{tjar}\xspace}

\newcommand{\OPT}{\mathsf{OPT}}
\newcommand{\True}{\textsf{True}}
\newcommand{\False}{\textsf{False}}

\newcommand{\OPEN}{\mathtt{OPEN}}
\newcommand{\CLOSE}{\mathtt{CLOSE}}

\newcommand{\karate}{\texttt{karate}\xspace}
\newcommand{\physicians}{\texttt{physicians}\xspace}

\floatsep=1pt
\textfloatsep=1pt
\dblfloatsep=1pt
\dbltextfloatsep=1pt
\abovecaptionskip=0pt
\intextsep=0pt

\author{Naoto Ohsaka}
\affiliation{\institution{NEC Corporation}\country{}}
\email{ohsaka@nec.com}
\orcid{0000-0001-9584-4764}

\author{Tatsuya Matsuoka}
\affiliation{\institution{NEC Corporation}\country{}}
\email{ta.matsuoka@nec.com}

\copyrightyear{2022}
\acmYear{2022}
\setcopyright{acmlicensed}\acmConference[WSDM '22]{Proceedings of the Fifteenth ACM International Conference on Web Search and Data Mining}{February 21--25, 2022}{Tempe, AZ, USA}
\acmBooktitle{Proceedings of the Fifteenth ACM International Conference on Web Search and Data Mining (WSDM '22), February 21--25, 2022, Tempe, AZ, USA}
\acmPrice{15.00}
\acmDOI{10.1145/3488560.3498382}
\acmISBN{978-1-4503-9132-0/22/02}

\settopmatter{printacmref=true}

\begin{document}
\fancyhead{}

\title{Reconfiguration Problems on Submodular Functions}

\begin{abstract}%
\emph{Reconfiguration problems} require finding a step-by-step transformation between
a pair of feasible solutions for a particular problem.
The primary concern in Theoretical Computer Science has been
revealing their computational complexity for \emph{classical} problems.

This paper presents an initial study on reconfiguration problems derived from a \emph{submodular function}, which has more of a flavor of Data Mining.
Our submodular reconfiguration problems request to
find a solution sequence connecting two input solutions such that
each solution has an objective value above a threshold in a submodular function $f: 2^{[n]} \to \bbR_+$ and
is obtained from the previous one by applying a simple transformation rule.
We formulate three reconfiguration problems:
\prb{Monotone Submodular Reconfiguration} (\prb{MSReco}), which applies to influence maximization, and
two versions of \prb{Unconstrained Submodular Reconfiguration} (\prb{USReco}), which apply to determinantal point processes.
Our contributions are summarized as follows:
\begin{itemize}
    \item We prove that \prb{MSReco} and \prb{USReco} are both \PSPACE-complete.
    \item We design a $\frac{1}{2}$-approximation algorithm for \prb{MSReco} and a $\frac{1}{n}$-approximation algorithm for (one version of) \prb{USReco}.
    \item We devise inapproximability results that
    approximating the optimum value of \prb{MSReco} within a $(1-\frac{1+\epsilon}{n^2})$-factor is \PSPACE-hard, and
    we cannot find a $(\frac{5}{6}+\epsilon)$-approximation for \prb{USReco}.
    \item We conduct numerical study on the reconfiguration version of
    influence maximization and determinantal point processes using
    real-world social network and movie rating data.
\end{itemize}
\end{abstract}

\begin{CCSXML}
<ccs2012>
<concept>
<concept_id>10002951.10003227.10003351</concept_id>
<concept_desc>Information systems~Data mining</concept_desc>
<concept_significance>300</concept_significance>
</concept>
<concept>
<concept_id>10003752.10003809.10003636</concept_id>
<concept_desc>Theory of computation~Approximation algorithms analysis</concept_desc>
<concept_significance>300</concept_significance>
</concept>
</ccs2012>
\end{CCSXML}

\ccsdesc[300]{Information systems~Data mining}
\ccsdesc[300]{Theory of computation~Approximation algorithms analysis}

\keywords{reconfiguration; submodular functions; approximation algorithms; influence maximization; determinantal point processes}

\maketitle

\section{Introduction}
\label{sec:intro}
Consider the following problem over the solution space:%
\begin{center}
\emph{Given a pair of feasible solutions for a particular source problem,\\ can we find a step-by-step transformation between them?}
\end{center}%
Such problems that involve transformation and movement are known by the name of \emph{reconfiguration problems} in Theoretical Computer Science \cite{heuvel13complexity,nishimura2018introduction,ito2011complexity}.
A famous example of reconfiguration problems is the 15 puzzle \cite{johnson1879notes}, where a \emph{feasible solution} is
an arrangement of $15$ numbered tiles on a $4 \times 4$ grid with one empty square, and a \emph{transformation} involves sliding a single tile to the empty square.
The goal is to transform from a given initial arrangement to the target arrangement such that the tiles are placed in numerical order.
This paper aims to introduce the concept of reconfiguration into Data Mining,
enabling us to \emph{connect} or \emph{interpolate} between
a pair of feasible solutions.
We explain two motivating examples of reconfiguration below:

\begin{table*}[tbp]
    \centering
    \caption{Complexity-theoretic results of submodular reconfiguration problems (see \cref{subsec:defining} for definitions).}
    \label{tab:summary}
    \setlength{\tabcolsep}{2pt}
    \renewcommand{\arraystretch}{0.7}
    \begin{tabular}{c|ccc|ccc}
    \toprule
    \textbf{name} & \textbf{source} & \textbf{transformation} & \textbf{\cref{sec:problem} formulation} & \textbf{\cref{sec:hardness} exact solution} & \textbf{\cref{sec:approx} approximability} & \textbf{\cref{sec:inapprox} inapproximability} \\
    \midrule
    \multirow{2}{*}{\MSR} &
    \multirow{2}{*}{$\displaystyle\max_{S \in {[n] \choose k}} f(S)$} &
    \multirow{2}{*}{jump} &
    \cref{prb:msr} &
    \PSPACE-complete  &
    $\max\{\frac{1}{2}, (1-\kappa)^2\}$-factor &
    $(1-\frac{1+\epsilon}{n^2})$-factor \footnotesize{$\Rightarrow$ no FPTAS} \\
    &
    &
    &
    \cref{prb:mmmsr} &
    (\cref{thm:msr-pspace}) &
    (\cref{thm:msr-approx}) &
    (\cref{thm:msr-inapprox}) \\
    \midrule

    \multirow{2}{*}{\USRTAR} &
    \multirow{2}{*}{$\displaystyle\max_{S \subseteq [n]} f(S)$} &
    \multirow{2}{*}{add/remove} &
    \cref{prb:usr-tar} &
    \PSPACE-complete &
    \textit{open} &
    $(\frac{5}{6}+\epsilon)$-factor \\
    &
    &
    &
    \cref{prb:mmusr-tar} &
    (\cref{thm:usr-tar-pspace}) &
    (see also \cref{subsec:usr-tar-approx}) &
    (\cref{thm:usr-tar-inapprox}) \\
    \midrule

    \multirow{2}{*}{\USRTJAR} &
    \multirow{2}{*}{$\displaystyle\max_{S \subseteq [n]} f(S)$} &
    \multirow{2}{*}{jump/add/remove} &
    \cref{prb:usr-tjar} &
    \PSPACE-complete &
    $\frac{1}{n}$-factor &
    $(\frac{5}{6}+\epsilon)$-factor \\
    &
    &
    &
    \cref{prb:mmusr-tjar} &
    (\cref{thm:usr-tjar-pspace}) &
    (\cref{thm:usr-tjar-approx}) &
    (\cref{thm:usr-tjar-inapprox}) \\
    \bottomrule
    \end{tabular}
    
    \small{$n$ denotes the size of the ground set; $\kappa$ denotes the total curvature of an input submodular function;
    $\epsilon$ is an arbitrarily small positive number.}
\end{table*}

\begin{figure}
    \centering
    \includegraphics[width=0.55\hsize]{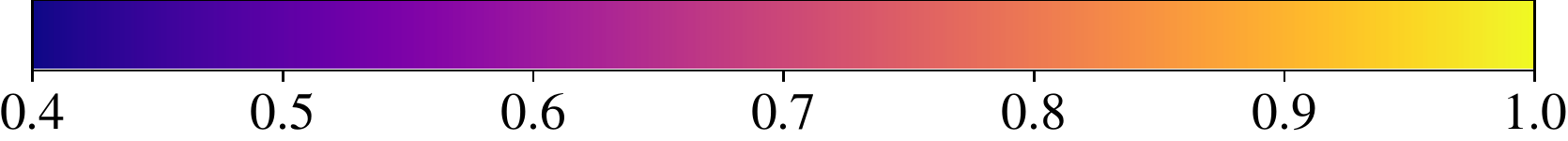}%
    \vspace{-1\baselineskip}
    \subfloat[\scriptsize{$f(X) = 23.2$ \\ $X = \{2, 4, 7, 9, 11, 30, 32, 34\}$}
    \label{fig:karate-S}]{%
    \includegraphics[width=0.333\hsize]{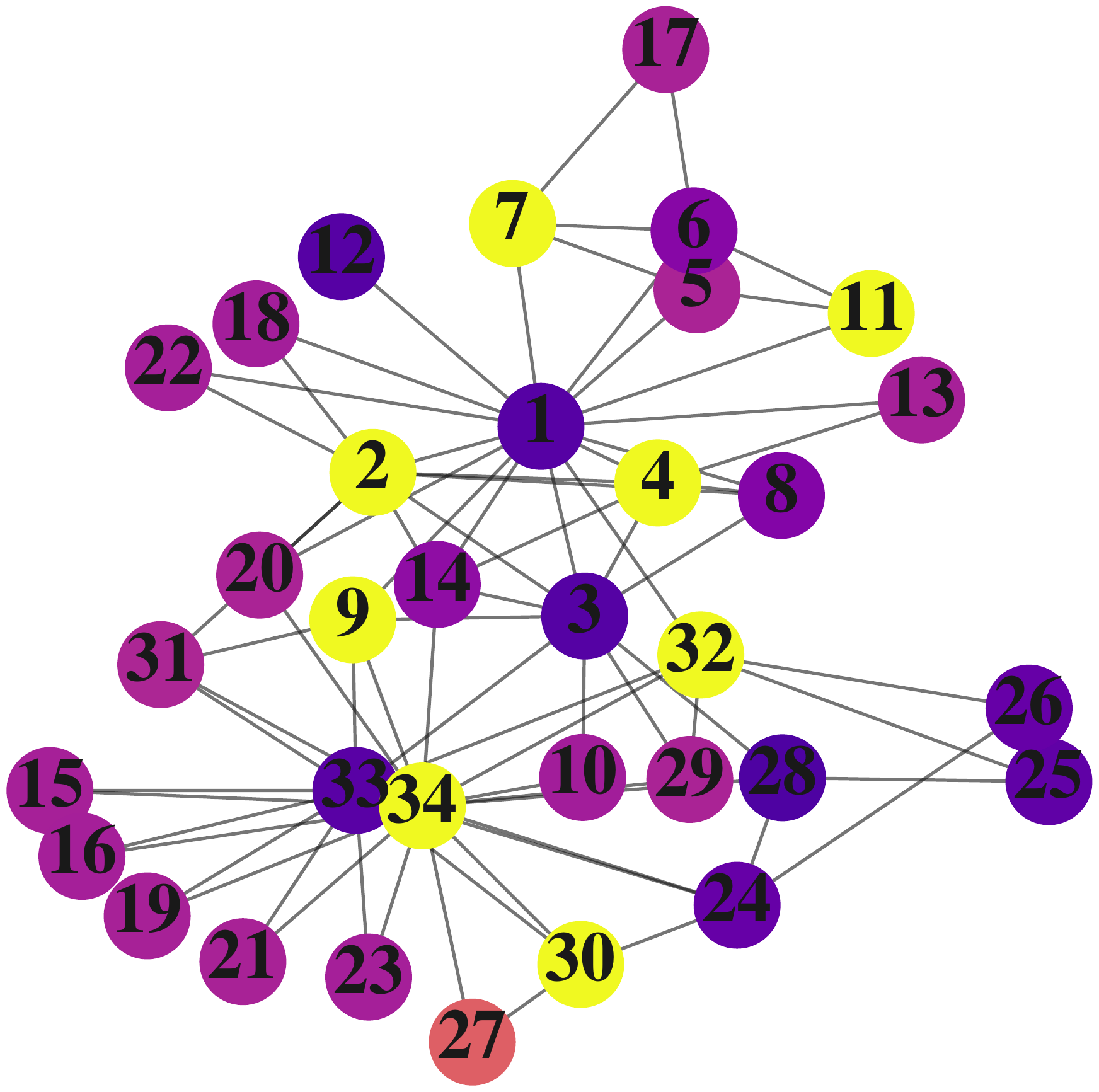}%
    }%
    \subfloat[\scriptsize{$f(S^{(4)}) = 25.5$ \\ $S^{(4)} = \{1, 2, 3, 6, 7, 32, 33, 34\}$}
    \label{fig:karate-S4}]{%
    \includegraphics[width=0.333\hsize]{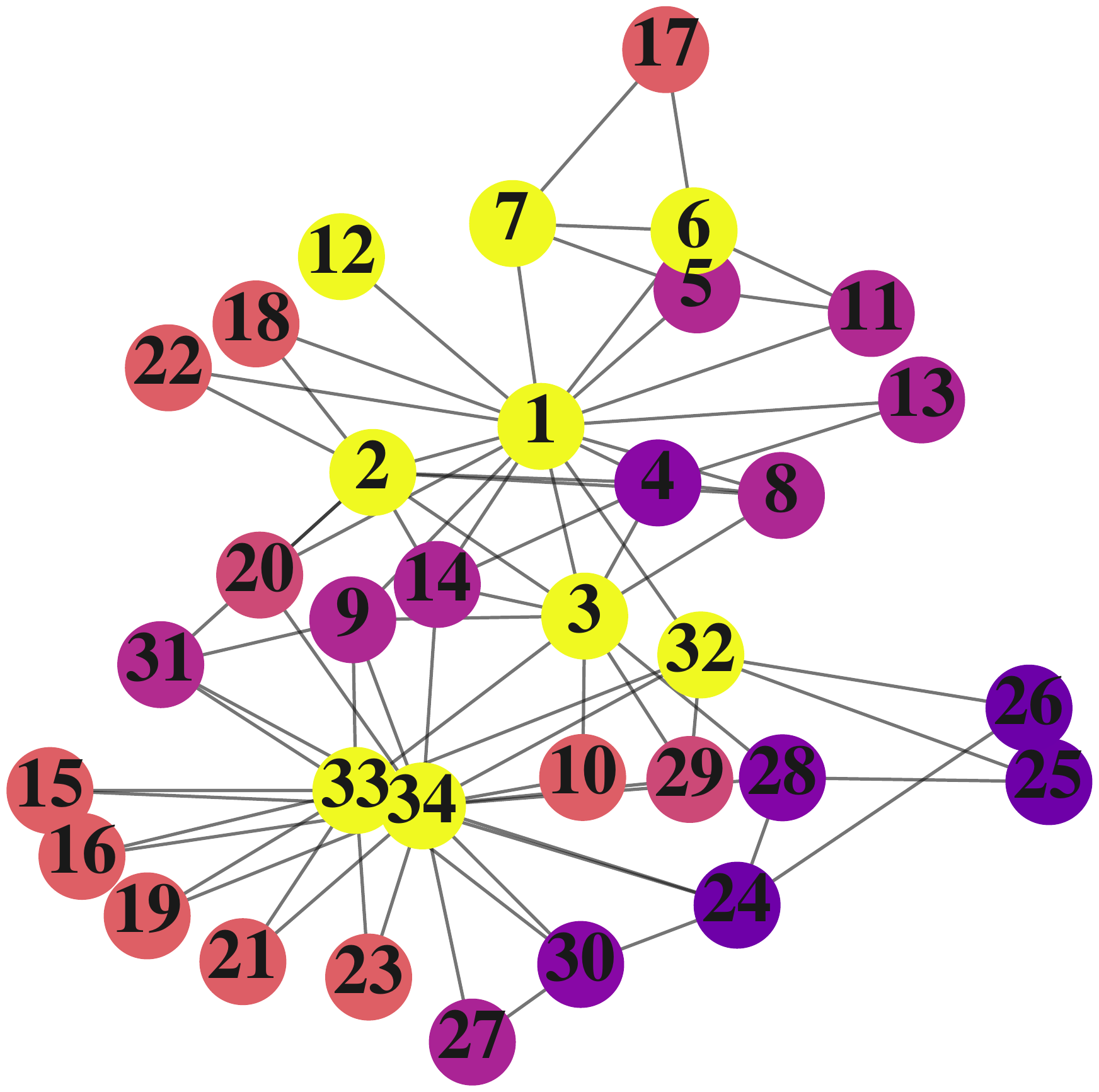}%
    }%
    \subfloat[\scriptsize{$f(Y) = 23.6$ \\ $Y = \{1, 3, 6, 14, 24, 25, 27, 33\}$}
    \label{fig:karate-T}]{%
    \includegraphics[width=0.333\hsize]{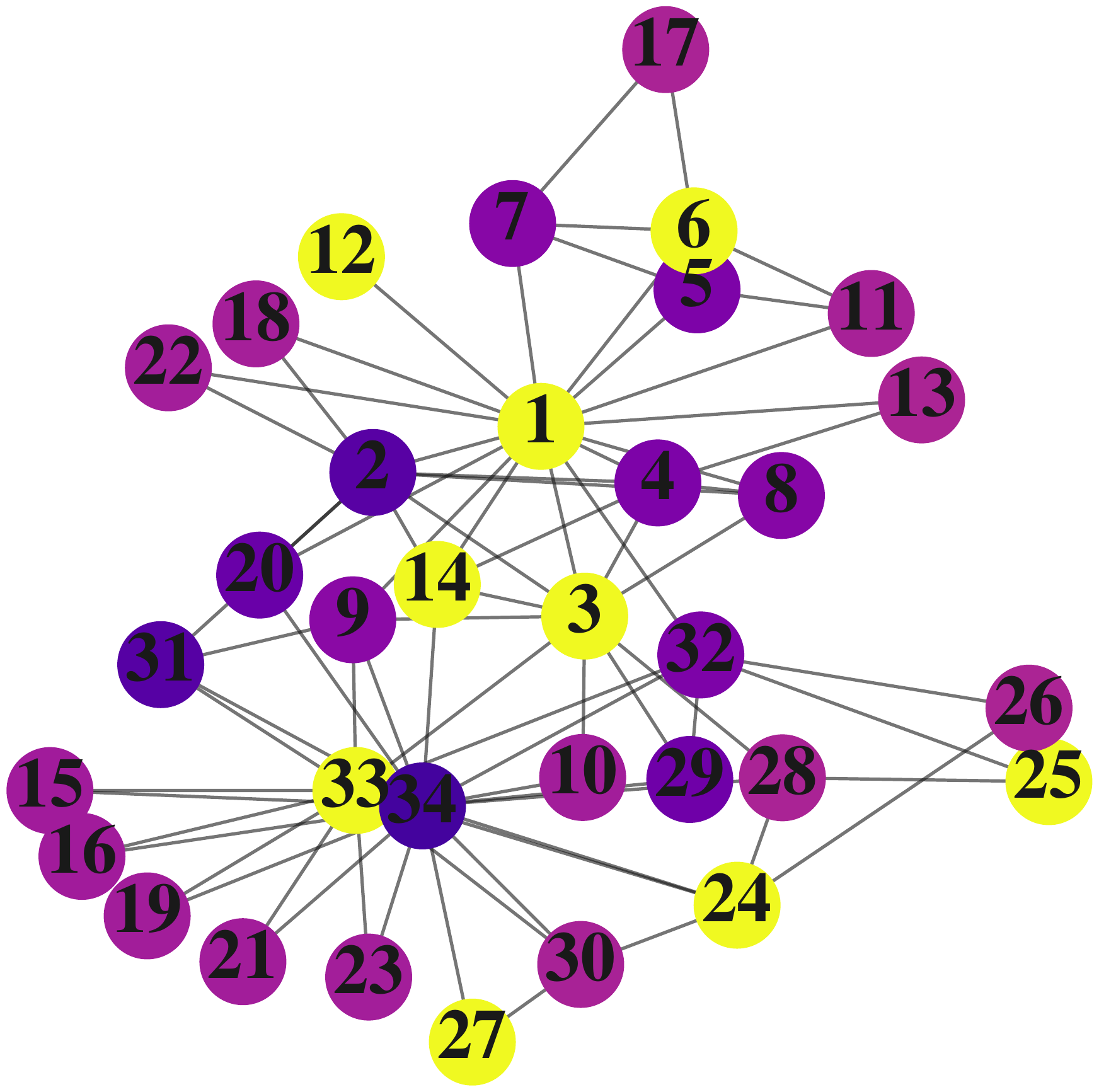}%
    }%
    \caption{Example of influence maximization reconfiguration on \karate network (see \cref{subsec:infmax}).
    Each vertex is colored according to the activation probability.
    Given seed sets $X$ (\ref{fig:karate-S}) and $Y$ (\ref{fig:karate-T}),
    we wish to find a sequence of influential seed sets connecting them.
    The 4th seed set $S^{(4)}$ (\ref{fig:karate-S4}) found by our algorithm (see \cref{subsec:msr-approx})
    is more influential than $X$ and $Y$.
    }
    \label{fig:karate-vis}
\end{figure}

\textbf{Influence Maximization Reconfiguration} (\cref{subsec:infmax}):
Suppose we are going to plan a \emph{viral marketing} campaign \cite{domingos2001mining} for promoting a company's new product.
Given structural data about a social network,
we can solve \emph{influence maximization} \cite{kempe2003maximizing}
to identify a small group of influential users.
However, the power of influence may decay as time goes by because social networks are \emph{evolving} \cite{leskovec2007graph,ohsaka2016dynamic} or
users may be affected by \emph{overexposure} \cite{loukides2020overexposure}.
One strategy to circumvent this issue is replacing an outdated group with a newly-found one.
When a change in user groups incurs a cost and we are given a limited budget (e.g., per day),
we need to \emph{interpolate} between an outdated group and a new one without significantly sacrificing the influence,
which entails the concept of reconfiguration.
\cref{fig:karate-vis} depicts an example of influence maximization.

\textbf{MAP Inference Reconfiguration} (\cref{subsec:map}):
Consider that we are required to arrange a list of items to be displayed on a recommender system.
If a feature vector is given for each item,
we can use a \emph{determinantal point process} \cite{borodin2005eynard,macchi1975coincidence} to extract a few items
achieving a good balance between item quality and set diversity \cite{kulesza2012determinantal,gillenwater2012near}.
Since \emph{novelty} plays a crucial role in increasing the recommendation utility \cite{vargas2011rank},
we would want to update the item list continuously.
On the other hand, we need to ensure \emph{stability} \cite{adomavicius2012stability}; i.e., the list should not be drastically changed over time,
which gives rise to reconfiguration.

Source problems for both examples are formulated as 
\prb{Submodular Maximization} \cite{nemhauser1978analysis,buchbinder2018deterministic,buchbinder2018submodular,krause2014submodular,buchbinder2015tight},
which finds many applications in data mining (see \cref{sec:related}).
Unfortunately,
the primary concern in the area of reconfiguration has been revealing
the computational complexity of reconfiguration problems for
\emph{classical} problems such as graph-algorithmic problems and Boolean satisfiability (see \cref{sec:related}),
which are incompatible with Data Mining and not applicable to the above examples.
Our objective is to formulate, analyze, and apply
reconfiguration problems derived from a \emph{submodular function}.

\subsection{Our Contributions}
We present an initial, systematic study on reconfiguration problems on submodular functions.
Our submodular reconfiguration problems request to determine
whether there exists a solution sequence connecting two input solutions such that 
each solution has an objective value above a threshold in a submodular function $f: 2^{[n]} \to \bbR_+$ and
is obtained from the previous one by applying a simple transformation rule (e.g., a single element addition and removal).
We formulate three submodular reconfiguration problems according to
\citet{ito2011complexity}'s framework of reconfiguration:
\begin{itemize}
\item \prb{Monotone Submodular Reconfiguration} (\MSR; \cref{prb:msr}):
This problem derives from \prb{Monotone Submodular Maximization}
and applies to influence maximization reconfiguration.
\item \prb{Unconstrained Submodular Reconfiguration} (\USRTAR and \USRTJAR; \cref{prb:usr-tar,prb:usr-tjar}):
These problems derive from \prb{Unconstrained Submodular Maximization} and two transformation rules.
MAP inference reconfiguration fits into them.
\end{itemize}
We further formulate the optimization variants (\cref{prb:mmmsr,prb:mmusr-tar,prb:mmusr-tjar}),
which aim to \emph{maximize} the minimum objective value among the solutions in the output sequence.
We analyze the proposed reconfiguration problems
through the lens of computational complexity.
Our complexity-theoretic results are summarized in \cref{tab:summary}.

\subsubsection*{\textbf{Hardness} (\cref{sec:hardness})}
We first investigate the computational tractability of
the submodular reconfiguration problems.
We prove that \MSR, \USRTAR, and \USRTJAR are all \PSPACE-complete (\cref{thm:msr-pspace,thm:usr-tar-pspace,thm:usr-tjar-pspace}),
which is at least as hard as \NP-completeness.

\subsubsection*{\textbf{Approximability} (\cref{sec:approx})}
Having established the hardness of solving
\MSR and \USR exactly,
we seek for \emph{approximation}
in terms of the minimum function value in the output sequence; namely,
we would like to maximize the minimum function value among the solutions as much as possible.
We design a $\max\{\frac{1}{2}, (1-\kappa)^2\}$-approximation algorithm for \MSR (\cref{thm:msr-approx}),
where $\kappa$ is the total curvature of a submodular function, and
a $\frac{1}{n}$-approximation algorithm for \USRTJAR (\cref{thm:usr-tjar-approx}).

\subsubsection*{\textbf{Inapproximability} (\cref{sec:inapprox})}
We further devise two \emph{hardness of approximation} results.
One is that
approximating the optimum value of \MSR within a factor of $(1-\frac{1+\epsilon}{n^2})$ is \PSPACE-hard (\cref{thm:msr-inapprox}),
implying that a fully polynomial-time approximation scheme does not exist assuming \cP~$\neq$~\PSPACE.
The other is that 
we cannot find a $(\frac{5}{6}+\epsilon)$-approximation 
for \USR, without making a complexity-theoretic assumption.
(\cref{thm:usr-tar-inapprox,thm:usr-tjar-inapprox}).

\subsubsection*{\textbf{Numerical Study} (\cref{sec:exp})}
We finally report numerical study on
the reconfiguration version of influence maximization \cite{kempe2003maximizing} using network data and
that of MAP inference on determinantal point process \cite{gillenwater2012near} using movie rating data,
which are formulated as \MSR and \USRTJAR, respectively.
Comparing to an A* search algorithm,
we observe that 
an approximation algorithm for \MSR quickly finds sequences
that are better than the worst-case analysis,
while that for \USRTJAR is far worse than the optimal sequence.

\section{Related Work}
\label{sec:related}

\subsubsection*{\textbf{Reconfiguration Problems}}
The concept of reconfiguration has arisen in problems involving transformation and movement, such as
the 15-puzzle \cite{johnson1879notes} and the Rubik's Cube.
\citet{ito2011complexity} established the unified framework of reconfiguration.
One of the most important reconfiguration problems is \emph{reachability},
asking to decide the existence of a solution sequence between two feasible solutions for a particular \emph{source} problem.
Countless source problems derive the respective reconfiguration problems in \citet{ito2011complexity}'s framework,
including graph-algorithmic problems, Boolean satisfiability, and others;
revealing their computational complexity has been the primary concern in Theoretical Computer Science.
Typically,
an \NP-complete source problem brings a \PSPACE-complete reachability problem, e.g., 
\prb{Vertex Cover} \cite{kaminski2012complexity},
\prb{Set Cover} \cite{ito2011complexity},
\prb{4-Coloring} \cite{bonsma2009finding},
\prb{Clique} \cite{ito2011complexity}, and
\prb{3-SAT} \cite{gopalan2009connectivity}.
On the other hand,
a source problem in \cP usually induces a reachability problem in \cP, e.g.,
\prb{Matching} \cite{ito2011complexity} and
\prb{2-SAT} \cite{gopalan2009connectivity}.
However, some exceptions are known; e.g.,
\prb{3-Coloring} is \NP-complete, but its reachability version is in \cP \cite{johnson2016finding}.
See \citet{nishimura2018introduction}'s survey  for more information.
This study explores reconfiguration problems for which the source problem is \prb{Submodular Maximization}, which generalizes \prb{Vertex Cover} and \prb{Set Cover} and has more of a flavor of Data Mining.

\subsubsection*{\textbf{Submodular Function Maximization}}
We review two submodular function maximization problems,
which have been studied in Theoretical Computer Science and applied in Data Mining.

Given a monotone submodular function,
\MSM requires finding 
a fixed-size set having the maximum function value.
The simple greedy algorithm has a provable guarantee of returning a $(1-1/\rme)$-factor approximation in polynomial time \cite{nemhauser1978analysis}.
This factor is the best possible as no polynomial-time algorithm can achieve a better approximation factor \cite{nemhauser1978best,feige1998threshold}.
Since monotone submodular functions abide by the law of \emph{diminishing returns},
\MSM has been applied to a diverse range of data mining tasks, e.g.,
influence maximization \cite{kempe2003maximizing}
document summarization \cite{lin2011class},
outbreak detection \cite{leskovec2007cost}, and
sensor placement \cite{krause2008near}.
We develop a $\frac{1}{2}$-approximation algorithm for the corresponding reconfiguration problem (\cref{subsec:msr-approx}).

Given a (not necessarily monotone) submodular function, 
\USM requires finding a subset that maximizes the function value.
This problem can be approximated within a $\frac{1}{2}$-factor \cite{buchbinder2015tight,buchbinder2018deterministic}, which
is proven to be optimal \cite{feige2011maximizing}.
Some of the application tasks include
movie recommendation, image summarization
\cite{mirzasoleiman2016fast},
and
MAP inference on determinantal point process \cite{gillenwater2012near}.
We develop a $\frac{1}{n}$-approximation algorithm for the corresponding reconfiguration problem (\cref{subsec:usr-tjar-approx}).

\section{Problem Formulation}
\label{sec:problem}

\subsubsection*{\textbf{Preliminaries}}
For a nonnegative integer $n$, let $[n] \triangleq \{1,2,\ldots, n\}$.
$\bbR_+$ represents the set of nonnegative real numbers.
For a finite set $S$ and a nonnegative integer $k$,
we write $ {S \choose k} $ for the family of all size-$k$ subsets of $S$.
A \emph{sequence} $\calS$ consisting of a finite number of sets $S^{(0)}, S^{(1)}, \ldots, S^{(\ell)}$
is denoted as $\langle S^{(0)}, S^{(1)}, \ldots, S^{(\ell)} \rangle$, and we write $S^{(i)} \in \calS$ to mean that $S^{(i)}$ appears in $\calS$ (at least once).
The symbol $\uplus$ is used to emphasize that the union is taken over two \emph{disjoint} sets.
Throughout this paper, we assume that every set function is nonnegative.
For a set function $f: 2^{[n]} \to \bbR_+$,
we say that
$f$ is \emph{monotone} if $f(S) \leq f(T)$ for all $S \subseteq T \subseteq [n]$,
$f$ is \emph{modular} if $f(S) + f(T) = f(S \cap T) + f(S \cup T)$ for all $S, T \subseteq [n]$,
and
$f$ is \emph{submodular} if
$f(S) + f(T) \geq f(S \cap T) + f(S \cup T)$ for all $S, T \subseteq [n]$.
Submodularity is known to be equivalent to the following \emph{diminishing returns} property \cite{schrijver2003combinatorial}:
$f(S \cup \{e\}) - f(S) \geq f(T \cup \{e\}) - f(T)$ for all
$S \subseteq T \subseteq [n]$ and $e \in [n] \setminus T$.
For a subset $R \subseteq [n]$,
the \emph{residual} \cite{krause2014submodular} is defined as a set function $f_R: 2^{[n] \setminus R} \to \bbR_+$ such that
$f_R(S) \triangleq f(S \uplus R) - f(R)$ for $S \subseteq [n] \setminus R$.
If $f$ is monotone and submodular, then so is $f_R$ \cite{krause2014submodular}.
The \emph{total curvature} $\kappa$ \cite{conforti1984submodular,vondrak2010submodularity}
of a monotone submodular function $f: 2^{[n]} \to \bbR_+$ is defined as 
$
    \kappa \triangleq 1 - \min_{e \in [n]}\frac{f([n]) - f([n] \setminus \{e\})}{f(\{e\})}.
$
The total curvature $\kappa$ takes a value from $0$ to $1$, which captures how far away $f$ is from being modular; e.g.,
a modular function has curvature $\kappa=0$, and 
a coverage function has curvature $\kappa=1$.\footnote{
Given a collection of $n$ subsets $A_1, \ldots, A_n$ of some ground set $U$, we refer to a set function $f: 2^{[n]} \to \bbR_+$ such that $f(S) \triangleq |\bigcup_{i \in S}A_i|$ as a \emph{coverage function}.}
We assume to be given access to a \emph{value oracle} for a set function $f$, which returns $f(S)$ whenever it is called with a query $S$.
We recall the definitions of two submodular function maximization problems:

\begin{itemize}
    \item[\textbf{1.}] \MSM: Given a monotone submodular function $f: 2^{[n]} \to \bbR_+$ and a solution size $k$, 
    maximize $f(S)$ subject to $S \in {[n] \choose k}$.
    \item[\textbf{2.}] \USM: Given a submodular function $f: 2^{[n]} \to \bbR_+$,
    maximize $f(S)$ subject to $S \subseteq [n]$.
\end{itemize}

\subsection{\citet{ito2011complexity}'s Reconfiguration Framework}
In reconfiguration problems, we wish to determine whether there exists a sequence of solutions between a pair of solutions for a particular ``source'' problem such that each is ``feasible'' and obtained from the previous one by applying a simple ``transformation rule.''
We recapitulate the reconfiguration framework of \citet{ito2011complexity}.
The reconfiguration framework requires three ingredients \cite{nishimura2018introduction,ito2011complexity,mouawad2015reconfiguration}:
\begin{itemize}
\item[\textbf{1.}]\textbf{a source problem}, which is usually a search problem in \cP or \NP-complete;
\item[\textbf{2.}]\textbf{a definition of feasible solutions};
\item[\textbf{3.}]\textbf{an adjacency relation} over the pairs of two solutions, typically symmetric and polynomial-time testable \cite{nishimura2018introduction}.
\end{itemize}
An adjacency relation can be defined in terms of
a \emph{reconfiguration step}, which specifies how a solution can be transformed. We say that two solutions are \emph{adjacent} if one can be transformed into the other by applying a single reconfiguration step.
We now define a central concept called \emph{reconfiguration sequences}.

\begin{definition}
For two feasible solutions $X$ and $Y$,
a \emph{reconfiguration sequence from $X$ to $Y$}
is a sequence of feasible solutions
$\calS = \langle S^{(0)}, S^{(1)}, \ldots, S^{(\ell)} \rangle $
starting from $X$ (i.e., $S^{(0)}=X$) and ending with $Y$ (i.e., $S^{(\ell)} = Y$) such that
every two consecutive solutions $S^{(i-1)}$ and $S^{(i)}$ for $i \in [\ell]$ are adjacent
(i.e., $S^{(i)}$ is obtained from $S^{(i-1)}$ by a single reconfiguration step).
The \emph{length} $\ell$ of $\calS$ is defined as the number of (possibly duplicate) solutions in it minus $1$.
\end{definition}

There are several types of reconfiguration problems \cite{nishimura2018introduction,mouawad2015reconfiguration,heuvel13complexity}.
One of the most important problems is \emph{reachability},
asking to determine whether there exists a reconfiguration sequence between a pair of feasible solutions.
Of course, reconfiguration problems for the same source problem can have
different complexities depending on the definitions of feasibility and adjacency.

\subsection{Defining Submodular Reconfiguration}
\label{subsec:defining}
We are now ready to formulate reconfiguration problems on submodular functions.
We first designate the ingredients required for defining the reconfiguration framework.
Source problems are either \MSM or \USM.
Given a submodular function $f:2^{[n]} \to \bbR_+$,
we define the feasibility according to \cite[\S 2.2]{ito2011complexity}:
We introduce a \emph{threshold} $\theta$, offering a lower bound on the allowed function values, and a set $S$ is said to be \emph{feasible} if $f(S) \geq \theta$.
For a set sequence $\calS$,
the \emph{value} of $\calS$, denoted $f(\calS)$, is defined as
the minimum function value among all sets in $\calS$, i.e.,
$f(\calS) \triangleq \min_{S^{(i)} \in \calS} f(S^{(i)})$.
Accordingly, a reconfiguration sequence $\calS$ must satisfy $f(\calS) \geq \theta$.
We consider three reconfiguration steps to specify an adjacency relation,
some of which are established in the literature:
\begin{itemize}
    \item[\textbf{1.}]\emph{Token jumping} (\TJ) \cite{kaminski2012complexity}:
    Given a set,
    a \TJ step can remove one element from it \emph{and} add another element not in it at the same time; i.e.,
    two sets are adjacent under \TJ if they have the same size and their intersection has a size one less than their size.
    \item[\textbf{2.}]\emph{Token addition or removal} (\TAR) \cite{ito2011complexity}:
    Given a set, a \TAR step can 
    remove an element from it \emph{or} add an element not in it; i.e.,
    two sets are adjacent under \TAR if the symmetric difference has size $1$.
    \item[\textbf{3.}]\emph{Token jumping, addition, or removal} (\TJAR): A \TJAR step can perform either a \TJ or \TAR step.
\end{itemize}
It is easy to see that these adjacency relations are symmetric and polynomial-time testable.

\subsubsection{\textbf{Reachability Problems}}
We define three reachability problems on a submodular function with different adjacency relations.\footnote{We do not consider \MSR under \TAR or \TJAR since they yield a set not in ${n \choose k}$.}

\begin{problem}[\prb{Monotone Submodular Reconfiguration}; \MSR]
\label{prb:msr}
Given a monotone submodular function $f: 2^{[n]} \to \bbR_+$,
two sets $X$ and $Y$ in ${[n] \choose k}$, and
a threshold $\theta$,
decide if there exists a reconfiguration sequence $\calS$ from $X$ to $Y$ under \TJ such that
$f(\calS) \geq \theta$.
\end{problem}

\begin{problem}[\prb{Unconstrained Submodular Reconfiguration in \TAR}; \USRTAR]
\label{prb:usr-tar}
Given a submodular function $f: 2^{[n]} \to \bbR_+$,
two subsets $X$ and $Y$ of $[n]$, and
a threshold $\theta$,
decide if there exists a reconfiguration sequence $\calS$ from $X$ to $Y$ under \TAR such that
$f(\calS) \geq \theta$.
\end{problem}

\begin{problem}[\prb{Unconstrained Submodular Reconfiguration in \TJAR}; \USRTJAR]
\label{prb:usr-tjar}
Given a submodular function $f: 2^{[n]} \to \bbR_+$,
two subsets $X$ and $Y$ of $[n]$, and
a threshold $\theta$,
decide if there exists a reconfiguration sequence $\calS$ from $X$ to $Y$ under \TJAR such that
$f(\calS) \geq \theta$.
\end{problem}

Note that these problems do not request an actual reconfiguration sequence.
Without loss of generality,
we assume that $\theta$ is at most $\min\{ f(X), f(Y) \}$,
because otherwise the answer is always ``no.''

\subsubsection{\textbf{Optimization Variants}}
By definition,
the answer to \cref{prb:msr,prb:usr-tar,prb:usr-tjar} is always ``yes''
if $\theta = 0$.
On the other hand, there exists a constant $\theta_{\mathrm{yes}}$,
referred to as a \emph{reconfiguration index} \cite{ito2016reconfiguration},
for which the answer is ``yes'' if $\theta \leq \theta_{\mathrm{yes}}$ and ``no'' otherwise.
We can thus think of the following \emph{optimization variants},
requiring that $f(\calS)$ be maximized among all possible reconfiguration sequences.
Such variants have been studied for \prb{Clique} \cite{ito2011complexity} and \prb{Subset Sum} \cite{ito2014approximability}.

\begin{problem}[\prb{Maximum Monotone Submodular Reconfiguration}; \MMSR]
\label{prb:mmmsr}
Given a monotone submodular function $f: 2^{[n]} \to \bbR_+$ and
two sets $X$ and $Y$ in ${[n] \choose k}$,
find a reconfiguration sequence $\calS$ from $X$ to $Y$
under \TJ
maximizing $f(\calS)$.
\end{problem}

\begin{problem}[\prb{Maximum Unconstrained Submodular Reconfiguration in \TAR}; \MUSRTAR]
\label{prb:mmusr-tar}
Given a submodular function $f: 2^{[n]} \to \bbR_+$ and
two subsets $X$ and $Y$ of $[n]$,
find a reconfiguration sequence $\calS$ from $X$ to $Y$
under \TAR maximizing $f(\calS)$.
\end{problem}

\begin{problem}[\prb{Maximum Unconstrained Submodular Reconfiguration in \TJAR}; \MUSRTJAR]
\label{prb:mmusr-tjar}
Given a submodular function $f: 2^{[n]} \to \bbR_+$ and
two subsets $X$ and $Y$ of $[n]$,
find a reconfiguration sequence $\calS$ from $X$ to $Y$
under \TJAR maximizing $f(\calS)$.
\end{problem}

\section{Hardness}
\label{sec:hardness}

In this section, we prove that \MSR, \USRTAR, and \USRTJAR are all \PSPACE-complete to solve (\cref{thm:msr-pspace,thm:usr-tar-pspace,thm:usr-tjar-pspace}).
Here, \PSPACE is a class of decision problems that can be solved using \emph{polynomial space} in the input size, and a decision problem is said to be \emph{\PSPACE-complete} if it is in \PSPACE and every problem in \PSPACE can be reduced to it in polynomial time.
\PSPACE is known to include (and believed to be outside \cite{arora2009computational}) \cP, \NP, and \shP.
Commonly known \PSPACE-complete problems are \prb{Quantified Boolean Formula} \cite{garey1979computers}, puzzles and games such as \prb{Sliding Blocks} \cite{hearn2005pspace} and \prb{Go} \cite{lichtenstein1980go}.
We can easily verify that submodular reconfiguration problems are included in \PSPACE, whose proof is deferred to \cref{app:proofs}.

\begin{observation}
\label{obs:pspace}
\cref{prb:msr,,prb:usr-tar,,prb:usr-tjar} are in \PSPACE.
\end{observation}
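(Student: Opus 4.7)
My plan is the standard Savitch-style argument: exhibit a nondeterministic polynomial-space decision procedure, and then invoke Savitch's theorem (\NPSPACE = \PSPACE) to conclude membership in \PSPACE. The crucial observation is that although a reconfiguration sequence may have exponential length, each intermediate solution is a subset of $[n]$, encodable in $n$ bits, so we only need to keep the \emph{current} set in memory rather than the whole sequence.

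Concretely, I would describe the following algorithm for (say) \USRTJAR; the other two problems are handled identically. The algorithm maintains a current set $S \subseteq [n]$ (initially $S = X$) and a counter $c$ (initially $0$). At each step, it nondeterministically guesses the next set $S' \subseteq [n]$, verifies in polynomial time that $(S,S')$ is adjacent under the appropriate relation (\TJ for \MSR, \TAR or \TJAR for \USR), makes a single value-oracle call to check that $f(S') \geq \theta$, overwrites $S \leftarrow S'$, and increments $c$. It accepts as soon as $S = Y$ and rejects once $c$ exceeds $2^n$. Since each of $S$, $S'$, $c$ fits in $O(n)$ bits, the whole procedure runs in polynomial space.

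Correctness of the counter bound follows because the reconfiguration graph on feasible solutions has at most $2^n$ vertices (at most $\binom{n}{k}$ for \MSR), so any shortest reconfiguration sequence that exists has length at most $2^n$; revisiting sets never helps. Hence each of \MSR, \USRTAR, and \USRTJAR is in \NPSPACE, and Savitch's theorem promotes this to \PSPACE.

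The step that requires a bit of care, but is not a real obstacle, is confirming that adjacency under \TJ, \TAR, and \TJAR can indeed be checked in polynomial time and space (immediate from the definitions: compute $|S \triangle S'|$ and compare to $0$, $1$, or $2$), and that the value oracle is allowed to be consulted in this model, which is the standard convention adopted earlier in the paper. No deeper combinatorics is needed.
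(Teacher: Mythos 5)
Your proposal is correct and follows essentially the same route as the paper: both arguments reduce membership to (i) polynomial-time feasibility and adjacency checks and (ii) Savitch's theorem collapsing \NPSPACE to \PSPACE. The only difference is that the paper delegates the nondeterministic-walk-with-bounded-counter construction to a cited general result for reconfiguration frameworks, whereas you spell it out explicitly; the content is the same.
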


\subsection{\PSPACE-completeness of \MSR}
\begin{theorem}
\label{thm:msr-pspace}
\MSR is \PSPACE-complete.
\end{theorem}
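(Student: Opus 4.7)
The plan is to show \PSPACE-hardness of \MSR by a polynomial-time reduction from \PSPACE-complete \prb{Vertex Cover Reconfiguration} under \TJ \cite{kaminski2012complexity}, since containment in \PSPACE is already ensured by \cref{obs:pspace}. The idea is that the classical reduction from \prb{Vertex Cover} to \MSM through the coverage function preserves solution size and lifts to the reconfiguration setting because the adjacency relation (\TJ) and the notion of feasibility (size-$k$ set with function value above a threshold) both translate directly.

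Concretely, given an instance of \prb{Vertex Cover Reconfiguration}, consisting of a graph $G=(V,E)$, an integer $k$, and two size-$k$ vertex covers $X,Y \subseteq V$, I would construct an \MSR instance on the ground set $V$ with the coverage function $f: 2^V \to \bbR_+$ defined by $f(S) \triangleq |\{e \in E : e \cap S \neq \emptyset\}|$, which is well known to be monotone submodular, together with threshold $\theta \triangleq |E|$ and the same sets $X, Y$ of size $k$. The reduction is clearly computable in polynomial time, and the key observation is that a size-$k$ set $S$ satisfies $f(S) \geq \theta$ if and only if every edge of $G$ is covered by $S$, i.e., $S$ is a vertex cover of $G$.

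Correctness then follows by a direct bijection between reconfiguration sequences in the two instances: any \TJ reconfiguration sequence $\langle S^{(0)}, \ldots, S^{(\ell)} \rangle$ from $X$ to $Y$ in $G$ whose sets are all vertex covers is precisely a \TJ reconfiguration sequence in the \MSR instance with $f(\calS) \geq \theta$, and vice versa. Hence the answer to the \MSR instance is ``yes'' if and only if the answer to the \prb{Vertex Cover Reconfiguration} instance is ``yes,'' which completes the reduction and establishes \PSPACE-hardness.

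I expect the main subtlety to be simply pointing to the right source of \PSPACE-hardness for \prb{Vertex Cover Reconfiguration} under \TJ (as opposed to token sliding) and stating the reduction cleanly; the combinatorial content is light because the coverage function behaves monotonically and submodularly on the nose, and the \TJ adjacency relation is preserved literally by the identity map on solutions.
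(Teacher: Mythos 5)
Your reduction is correct and uses exactly the same gadget as the paper: the coverage function $f(S) = $ number of edges incident to $S$, threshold $\theta = |E|$, and the observation that a size-$k$ set is feasible iff it is a vertex cover, with \TJ adjacency carried over by the identity map. The only divergence is the choice of source problem. You reduce from \prb{Vertex Cover Reconfiguration} under \TJ with two (not necessarily minimum) size-$k$ vertex covers, relying on the known \PSPACE-completeness of the token-jumping variant from \cite{kaminski2012complexity}; the paper instead defines \MVCR with two \emph{minimum} vertex covers (\cref{prb:rvcr}) and proves its \PSPACE-hardness from scratch via \SATR (\cref{lem:rvcr-pspace}). For \cref{thm:msr-pspace} alone your route is the more economical one, since minimality of the input covers plays no role in the correctness of the coverage-function reduction; the paper's extra work pays off only later, in the proof of \cref{thm:usr-tjar-pspace}, where the case analysis on sets of size $k-1$ requires that no smaller vertex cover exists, i.e., that $k$ is the minimum vertex cover size. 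The one point you should pin down when writing this up is that the \PSPACE-hardness you invoke is indeed for the token-jumping (fixed-size exchange) adjacency rather than only for token sliding or \TAR; this does follow from \cite{kaminski2012complexity} (via complementation of independent set reconfiguration and the equivalence of \TJ and \TAR established there), but the citation needs to be made precise.
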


To prove \cref{thm:msr-pspace}, we use a polynomial-time reduction from \MVCR.
Of a graph, a \emph{vertex cover} is a set of vertices that
include at least one endpoint of every edge of the graph.
Given a graph and an integer $k$,
it is \NP-complete to decide if
there exists a vertex cover of size $k$ \cite{karp1972reducibility}.
We define \MVCR as follows.

\begin{problem}[\MVCR]
\label{prb:rvcr}
Given a graph $G = (V,E)$ and
two minimum vertex covers $C^x$ and $C^y$ of the same size,
determine whether there exists a sequence of minimum vertex covers
from $C^x$ to $C^y$ under \TJ.
\end{problem}

Our definition is different from that of \prb{Vertex Cover Reconfiguration} due to \cite{ito2011complexity,ito2016reconfiguration}, in which two input vertex covers may not be minimum.
We show that \cref{prb:rvcr} is \PSPACE-hard,
whose proof is reminiscent of \cite[Theorem 2]{ito2011complexity} and deferred to \cref{app:proofs}.

\begin{lemma}
\label{lem:rvcr-pspace}
\cref{prb:rvcr} is \PSPACE-hard.
\end{lemma}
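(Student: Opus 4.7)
The plan is to give a polynomial-time reduction from \SATR, shown to be \PSPACE-complete by \citet{gopalan2009connectivity}. An instance consists of a 3-CNF formula $\phi$ with $n$ variables and $m$ clauses together with two satisfying assignments $\alpha, \beta$, and we must decide whether there is a sequence of satisfying assignments from $\alpha$ to $\beta$ in which consecutive entries differ in exactly one variable. Given such an instance I build the classical variable/clause gadget graph $G$: a \emph{variable edge} $\{u_i^+, u_i^-\}$ for each variable $x_i$, a \emph{clause triangle} on $\{t_{j,1}, t_{j,2}, t_{j,3}\}$ for each clause $C_j$, and for the $k$-th literal $\ell_{j,k}$ of $C_j$ an edge from $t_{j,k}$ to $u_i^+$ when $\ell_{j,k} = x_i$ and to $u_i^-$ when $\ell_{j,k} = \bar x_i$. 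A direct counting argument shows every vertex cover has size at least $n+2m$, with equality iff for every triangle vertex $t_{j,k}$ outside the cover the literal $\ell_{j,k}$ is true under the natural projection $\alpha_C$ defined by $u_i^+ \in C \Leftrightarrow \alpha_C(x_i) = \True$. Hence minimum vertex covers of $G$ project surjectively onto satisfying assignments of $\phi$, and I take $C^x$ and $C^y$ to be any MVCs projecting to $\alpha$ and $\beta$ respectively.

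The forward direction is then immediate: from any \TJ-sequence of MVCs from $C^x$ to $C^y$, the projections form a walk on assignments whose consecutive entries either coincide (a triangle-vertex swap) or differ in exactly one variable (a $u_i^{\pm}$ swap), and every visited assignment is satisfying because every visited cover is an MVC; deduplicating yields a valid \SATR\ sequence from $\alpha$ to $\beta$.

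For the converse I simulate each single-variable flip $\alpha_t \to \alpha_{t+1}$ by a short block of \TJ\ moves. Before swapping $u_j^+$ with $u_j^-$, I must first \emph{shuffle} every clause triangle $T_k$ whose currently uncovered vertex $t_{k,l}$ corresponds to a literal $\ell_{k,l}$ on the flipped variable $x_j$, namely by replacing $t_{k,l}$ in the cover by some other triangle vertex $t_{k,l''}$ whose literal $\ell_{k,l''}$ stays true across the flip. The main obstacle is verifying that such an index $l''$ always exists and that the shuffle itself is a legal \TJ\ move ending in an MVC. Existence follows because $\alpha_{t+1}$ satisfies $C_k$ and differs from $\alpha_t$ only in $x_j$, so, assuming w.l.o.g.\ that each clause has three distinct variables, $C_k$ contains a literal on a variable other than $x_j$ that is true under both $\alpha_t$ and $\alpha_{t+1}$; this is the $\ell_{k,l''}$ I take. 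Legality is then routine: two triangle vertices still cover all three triangle edges, and the newly uncovered edge incident to $t_{k,l''}$ is covered by its literal vertex, which lies in $C$ because $\ell_{k,l''}$ is true under $\alpha_t$. After performing all such shuffles, swapping $u_j^+$ and $u_j^-$ produces an MVC projecting to $\alpha_{t+1}$, and concatenating these blocks across the whole \SATR\ sequence gives the sought reconfiguration sequence of MVCs from $C^x$ to $C^y$.
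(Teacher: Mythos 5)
Your proof is correct and follows essentially the same route as the paper: a polynomial-time reduction from \SATR via the standard variable-edge/clause-gadget graph, with minimum vertex covers of size $n+2m$ encoding satisfying assignments and each single variable flip simulated by a short block of \TJ swaps. The only differences are presentational --- the paper phrases the argument in terms of maximum independent sets (with cross-edges joining \emph{opposite} literals) and then complements to vertex covers, whereas you work with covers directly and spell out the clause-triangle shuffling step that the paper leaves as an observation.
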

\begin{proof}[Proof of \cref{thm:msr-pspace}]

We present a polynomial-time reduction from \MVCR.
Suppose we are given
a graph $G=(V,E)$ and two minimum vertex covers $C^x$ and $C^y$.
Define a set function $f: 2^{V} \to \bbR_+$ such that
$f(S)$ for $S \subseteq V$ is the number of edges in $E$ that are incident to $S$.
In particular, $f(S) = |E|$ if and only if $S$ is a vertex cover of $G$.
Since $f$ is monotone and submodular,
we construct an instance of \MSR consisting of
$f$, $C^x$, $C^y$, and
a threshold $|E|$.
Observe that
a reconfiguration sequence for the \MVCR instance is
a reconfiguration sequence for the \MSR instance, and vice versa,
which completes the reduction.
\end{proof}

\subsection{\PSPACE-completeness of \USR}
\begin{theorem}
\label{thm:usr-tar-pspace}
\USRTAR is \PSPACE-complete.
\end{theorem}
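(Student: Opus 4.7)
The plan is to reduce from \MVCR (shown \PSPACE-hard in \cref{lem:rvcr-pspace}) to \USRTAR; membership in \PSPACE is immediate from \cref{obs:pspace}. Given an \MVCR instance consisting of a graph $G=(V,E)$ with $n \triangleq |V|$ and two minimum vertex covers $C^x, C^y$ of size $k$, I would define
\[
f(S) \triangleq |\{e \in E : e \cap S \neq \emptyset\}| - \max\{0, |S| - (k+1)\} + (n-k).
\]
The first term, a coverage function, is submodular; the second is submodular because $\max\{0, |S|-(k+1)\}$ is a convex function of the modular function $|S|$, hence supermodular, so its negation is submodular; the additive constant preserves submodularity and ensures nonnegativity. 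With threshold $\theta \triangleq |E| + (n-k)$, one checks that $f(S) \geq \theta$ if and only if $S$ is a vertex cover of $G$ with $|S| \leq k+1$. The resulting \USRTAR instance $(f, C^x, C^y, \theta)$ is polynomial-time constructible.

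For the equivalence of reconfiguration sequences, the forward direction is routine: each \TJ step $T^{(i)} \to (T^{(i)} \setminus \{v'\}) \cup \{v\}$ in an \MVCR sequence is simulated by two \TAR steps, first adding $v$ (yielding a vertex cover of size $k+1$, feasible since it is a superset of a vertex cover) and then removing $v'$ (returning to a minimum vertex cover). The reverse direction rests on a size-alternation observation: every feasible set in the \USRTAR instance must be a vertex cover (so its size is at least $k$, by minimality of $k$) and must satisfy $|S| \leq k+1$ (by the penalty term), so its size lies in $\{k, k+1\}$. Since a \TAR step changes the size by exactly one and the endpoints $C^x, C^y$ have size $k$, the sizes along the sequence alternate strictly between $k$ and $k+1$. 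Extracting the size-$k$ subsequence yields a chain of minimum vertex covers, and each consecutive pair of size-$k$ sets in it differs by adding some $v$ then removing some $v'$, which is precisely a \TJ move.

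The main obstacle is verifying this size-alternation rigorously so that the reverse direction produces a valid \MVCR sequence: one must rule out detours through sets of size smaller than $k$ (impossible since such sets cannot cover all edges) as well as through sets of size exceeding $k+1$ (blocked by the penalty term driving $f$ below $\theta$). Once the alternation is established, the bijection between \TJ steps and add-then-remove \TAR pairs makes the correspondence tight, and \PSPACE-hardness of \USRTAR follows.
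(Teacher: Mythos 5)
Your reduction is correct, but it takes a genuinely different route from the paper's. The paper proves \cref{thm:usr-tar-pspace} by reducing from \prb{Monotone Not-All-Equal 3-SAT Reconfiguration} (\PSPACE-complete by an external result of Cardinal et al.): the function counts not-all-equal-satisfied clauses, and a single variable flip corresponds directly to a single \TAR step, so no simulation gadget is needed. You instead reduce from \MVCR, reusing \cref{lem:rvcr-pspace}, and simulate each \TJ step by an add-then-remove pair of \TAR steps, using the penalty term $-\max\{0,|S|-(k+1)\}$ to confine feasible sets to vertex covers of size $k$ or $k+1$. Your construction checks out: the penalty is the negation of a convex function of the modular function $|S|$, hence submodular; the additive constant $n-k$ keeps $f$ nonnegative; feasibility at threshold $|E|+(n-k)$ is exactly ``vertex cover of size at most $k+1$''; and the size-alternation argument is airtight because size $k-1$ is infeasible (no vertex cover that small) and size $k+2$ is infeasible (penalty at least $1$), so sizes must strictly alternate between $k$ and $k+1$ and the size-$k$ subsequence is a \TJ sequence of minimum vertex covers (the only loose end, an add-then-remove pair with $v=v'$ producing a repeated set, is handled by deleting the duplicate). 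What your approach buys is self-containment --- it leans only on machinery the paper already develops, and it is structurally parallel to the paper's own proof of \cref{thm:usr-tjar-pspace}, which also reduces from \MVCR with a size penalty (there used to \emph{forbid} \TAR steps rather than to pair them up). What the paper's approach buys is a shorter argument with no size bookkeeping, at the cost of importing the \PSPACE-completeness of a nonstandard satisfiability reconfiguration problem.
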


\begin{proof}
We demonstrate a polynomial-time reduction from
\prb{Monotone Not-All-Equal 3-SAT Reconfiguration},
which is \PSPACE-complete \cite{cardinal2020reconfiguration}.
A \emph{3-conjunctive normal form (3-CNF) formula} $\phi$ is said to be \emph{monotone} if no clause contains negative literals
(e.g., $\phi = (x_1 \vee x_2 \vee x_3) \wedge (x_2 \vee x_3 \vee x_4) \wedge (x_1 \vee x_3 \vee x_4)$).
We assume that every clause of $\phi$ contains exactly three literals.
We say that a truth assignment $\vec{\sigma}$
\emph{not-all-equal satisfies} $\phi$ if
every clause contains exactly two literals with the same value; i.e.,
it contains at least one true literal and at least one false literal (e.g., $\vec{\sigma}(x_1) = \vec{\sigma}(x_2) = \vec{\sigma}(x_4) = \True$ and $\vec{\sigma}(x_3) = \False$).
In \prb{Monotone Not-All-Equal 3-SAT Reconfiguration},
given a monotone 3-CNF formula $\phi$ and
two not-all-equal satisfying truth assignments $\vec{\sigma}^x$ and $\vec{\sigma}^y$ of $\phi$,
we wish to determine whether there exists a sequence of not-all-equal satisfying truth assignments of $\phi$ between
$\vec{\sigma}^x$ and $\vec{\sigma}^y$
such that each truth assignment is obtained from the previous one by a single variable flip; i.e., they differ in exactly one variable (cf.~\SATR \cite{gopalan2009connectivity} in \cref{prb:satr}).

Suppose we are given a monotone 3-CNF formula $\phi$ with $n$ variables $x_1, \ldots, x_n$ and $m$ clauses $c_1, \ldots, c_m$ and
two not-all-equal satisfying truth assignments $\vec{\sigma}^x$ and $\vec{\sigma}^y$ of $\phi$.
For a subset $S \subseteq [n]$,
we write $\vec{\sigma}_S$ for a truth assignment such that
$\vec{\sigma}_S(x_i)$ for variable $x_i$ is $\True$ if $i \in S$ and $\False$ otherwise.
For a truth assignment $\vec{\sigma}$,
we define the set $S_{\vec{\sigma}} \triangleq \{ i \in [n] \mid \vec{\sigma}(x_i) = \True \}$.
We now construct a set function $f_\phi: 2^{[n]} \to \bbR_+$ such that
$f(S)$ for $S \subseteq [n]$ is
the number of clauses not-all-equal satisfied by $\vec{\sigma}_S$.
In particular, $f(S) = m$ if $\vec{\sigma}_S$
not-all-equal satisfies $\phi$.
Since $f$ is submodular,\footnote{
Suppose $\phi$ contains a single clause, say, $\phi = x_1 \vee x_2 \vee x_3$.
Then, $f_\phi$ can be written as
$f_\phi(S) = g(|S \cap [3]|)$,
where $g: \bbN \to \bbR_+$ is defined as $g(0)=g(3)=0$ and $g(1)=g(2)=1$.
Since $g$ is concave, $f_\phi$ is submodular \cite[Proposition 5.1]{lovasz1983submodular}.
}
we construct an instance of \USR consisting of
$f$,
$S_{\vec{\sigma}^x}$, $S_{\vec{\sigma}^y}$, and
a threshold $m$.
Observe that
there exists a reconfiguration sequence for the \prb{Monotone Not-All-Equal 3-SAT Reconfiguration} instance
\emph{if and only if}
there exists a reconfiguration sequence for the \USR instance,
which completes the reduction.
\end{proof}

The last \PSPACE-completeness result is shown below,
whose proof is based on a reduction from \MVCR and deferred to \cref{app:proofs}.

\begin{theorem}
\label{thm:usr-tjar-pspace}
\USRTJAR is \PSPACE-complete.
\end{theorem}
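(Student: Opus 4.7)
My plan is to imitate the pattern of \cref{thm:msr-pspace} and reduce from \MVCR (\PSPACE-hard by \cref{lem:rvcr-pspace}), since membership in \PSPACE is already settled by \cref{obs:pspace}. The immediate obstacle is that the coverage function used for \MSR does \emph{not} survive the move from \TJ to \TJAR: given only that $f(S)\geq \theta$ means $S$ is a vertex cover, a \TAR step could grow a minimum vertex cover into a non-minimum one and still meet the threshold, destroying the correspondence with \MVCR. So the key technical step will be to design a submodular $f$ whose super-level set at the chosen threshold is exactly the family of \emph{minimum} vertex covers, not arbitrary covers.

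To achieve this I would penalize cardinality additively. Given an \MVCR instance $(G=(V,E),C^x,C^y)$ with $|V|=n$ and $k=|C^x|=|C^y|$, I plan to set
\[
    f(S)\;\triangleq\; M\cdot c(S) + (n-|S|),\qquad S\subseteq V,
\]
where $c(S)$ is the number of edges of $G$ incident to $S$ and $M\triangleq n+1$. Here $c$ is a coverage function (monotone submodular), $n-|S|$ is modular and nonnegative on $2^V$, so $f$ is a nonnegative submodular function and admits a polynomial-time value oracle. I take threshold $\theta\triangleq M|E|+(n-k)$ and feed $(f,C^x,C^y,\theta)$ to \USRTJAR.

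The core lemma I would establish is the characterization $f(S)\geq\theta$ \emph{iff} $S$ is a minimum vertex cover. The $(\Leftarrow)$ direction is a direct computation. For $(\Rightarrow)$, I split into cases on $|S|$: if $|S|>k$, then even assuming $c(S)=|E|$ we get $f(S)\leq M|E|+(n-k-1)<\theta$; if $|S|\leq k$ and $S$ is not a cover, then $c(S)\leq|E|-1$ and $f(S)\leq M(|E|-1)+n$, which is strictly below $\theta$ precisely because the choice $M>k$ (guaranteed by $M=n+1\geq k+1$) makes $k-|S|<M$; and $|S|<k$ falls into the previous case since no set smaller than $k$ can be a cover by minimality of $k$.

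Given this characterization, the equivalence between the two reconfiguration problems is a quick finish. Any \USRTJAR sequence meeting the threshold remains within the family of minimum vertex covers, so any add or remove step would land at cardinality $k\pm 1$ and violate $f\geq\theta$; therefore every step is a \TJ step, producing an \MVCR sequence between $C^x$ and $C^y$. Conversely an \MVCR sequence is a sequence of minimum vertex covers linked by \TJ steps, and each set has $f=\theta$, so it is a valid \USRTJAR sequence. Combining this with \cref{obs:pspace} yields \PSPACE-completeness. The only subtle point is checking the inequality for case (ii) of the characterization above; everything else is essentially bookkeeping.
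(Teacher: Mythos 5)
Your proposal is correct and follows essentially the same route as the paper: a reduction from \MVCR in which the coverage function is combined with an additive cardinality penalty so that the sets meeting the threshold are exactly the minimum vertex covers, which forces every step of a feasible \USRTJAR sequence to be a \TJ step. The paper uses $f(S)=c(S)+\tfrac{1}{2}(n-|S|)$ with threshold $|E|+\tfrac{n-k}{2}$ where you use $f(S)=(n+1)\,c(S)+(n-|S|)$; the difference is only in the choice of weights, and your case analysis establishing the level-set characterization is sound.
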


\begin{algorithm}[tbp]
\caption{Greedy algorithm.}
\label{alg:greedy}
\small
\begin{algorithmic}[1]
    \Require
    function $f: 2^{[n]} \to \bbR_+$;
    set $N \subseteq [n]$;
    solution size $k \leq |N|$.
    \State \textbf{for each } $i = 1 \To k$ \textbf{do } $\displaystyle e_i \leftarrow \argmax_{e \in N \setminus \{ e_1, \ldots, e_{i-1} \}} f(\{e_1, \ldots, e_{i-1}, e\}) $.
    \State \Return sequence $\langle e_1, \ldots, e_k \rangle$.
\end{algorithmic}
\end{algorithm}

\section{Approximability}
\label{sec:approx}

In the previous section, we saw that 
\MSR, \USRTAR, and \USRTJAR are all \PSPACE-complete,
implying that their optimization variants
are also hard to solve exactly in polynomial time.
However, there is still room for consideration of \emph{approximability}.
A \emph{$\rho$-approximation algorithm} for $\rho \leq 1$ is 
a polynomial-time algorithm that returns a reconfiguration sequence $\calS$ such that
$f(\calS) \geq \rho \cdot f(\calS^*)$,
where $\calS^*$ is an optimal reconfiguration sequence with the maximum value.
We design a $\max\{\frac{1}{2}, (1-\kappa)^2\}$-approximation algorithm for \MMSR
(\cref{subsec:msr-approx}; \cref{thm:msr-approx}), where
$\kappa$ is the total curvature, and
a $\frac{1}{n}$-approximation algorithm for \MUSRTJAR (\cref{subsec:usr-tjar-approx}; \cref{thm:usr-tjar-approx}),
while we explain the difficulty in algorithm development for \MUSRTAR (\cref{subsec:usr-tar-approx}).

\subsection{Greedy Algorithm}
Before going into details of the proposed algorithms, we introduce the \emph{greedy algorithm} shown in \cref{alg:greedy}, which is used as a subroutine.
Given a set function $f: 2^{[n]} \to \bbR_+$,
a ground set $N \subseteq [n]$, and a solution size $k \leq |N|$,
the greedy algorithm iteratively selects
an element of $N$, not having been chosen so far, that maximizes the function value.
The number of calls to a value oracle of $f$ is at most $|N|k$.
Let $e_i$ denote an element chosen at the $i$-th iteration;
define $S_i \triangleq \{e_1, \ldots, e_i\}$.
We call the output sequence $\langle e_1, \ldots, e_k \rangle$ a \emph{greedy sequence}.
If $f$ is a submodular function, then
the following inequality is known to hold for any $1 \leq i \leq j \leq k$, see, e.g., \cite{krause2014submodular}:
\begin{align}
\label{eq:greedy-ineq}
    f(S_i) - f(S_{i-1}) \geq f(S_j) - f(S_{j-1}).
\end{align}

\subsection{$\max\{\frac{1}{2}, (1-\kappa)^2\}$-Approximation Algorithm for \MMSR}
\label{subsec:msr-approx}

\cref{alg:swap} describes the proposed approximation algorithm for \MMSR.
Given a monotone submodular function $f: 2^{[n]} \to \bbR_+$ and
two sets $X$ and $Y$ in ${[n] \choose k}$,
it first invokes \cref{alg:greedy} on
$f_R$,
$X \setminus R$, and $k'$
(resp.~$f_R$, $Y \setminus R$, and $k'$),
where $R \triangleq X \cap Y$ and $k' \triangleq |X \setminus R| = |Y \setminus R|$,
to obtain a greedy sequence
$\langle x_1, \ldots, x_{k'} \rangle$
(resp.~$\langle y_1, \ldots, y_{k'} \rangle$).
It then returns
a set sequence from $X$ to $Y$,
the $i$-th set in which is defined as
$S^{(i)} \triangleq \{x_1, \ldots, x_{k'-i}\} \uplus \{ y_1, \ldots, y_i \} \uplus R$.
Our algorithm is guaranteed to return
a $\max\{\frac{1}{2}, (1-\kappa)^2\}$-approximation reconfiguration sequence in $\bigO(nk)$ time, where $\kappa$ is the total curvature of $f$.
\cref{alg:swap} is thus nearly optimal whenever $\kappa \approx 0$.
Such a small $\kappa$ can be observed in real-world problems, e.g., entropy sampling on Gaussian radial basis function kernels \cite{sharma2015greedy}.
\begin{algorithm}[tbp]
\caption{$\max\{\frac{1}{2}, (1-\kappa)^2\}$-approximation algorithm for \MMSR.}
\label{alg:swap}
\small
\begin{algorithmic}[1]
    \Require
    monotone submodular func.~$f: 2^{[n]} \to \bbR_+$;
    two sets $X, Y \in {[n] \choose k}$.
    \State $R \leftarrow X \cap Y, \quad X' \leftarrow X \setminus R, \quad Y' \leftarrow Y \setminus R, \quad k' \leftarrow |X'| = |Y'|$.
    \State invoke \cref{alg:greedy} on $f_R,X',k'$ to get greedy sequence
    $\langle x_1, \ldots, x_{k'} \rangle$.
    \State invoke \cref{alg:greedy} on $f_R,Y',k'$ to get greedy sequence $\langle y_1, \ldots, y_{k'} \rangle$.
    \State \textbf{for each} $i = 0 \To k'$ \textbf{do} $S^{(i)} \leftarrow \{x_1, \ldots, x_{k'-i}\} \uplus \{ y_1, \ldots, y_i \} \uplus R$.
    \State \Return sequence $\calS = \langle S^{(0)}, \ldots, S^{(k')} \rangle$.
\end{algorithmic}
\end{algorithm}

\begin{theorem}
\label{thm:msr-approx}
Given a monotone submodular function $f: 2^{[n]} \to \bbR_+$ and
two sets $X$ and $Y$ in ${[n] \choose k}$,
\cref{alg:swap} returns a reconfiguration sequence
$\calS$ for \MMSR of length at most $k$ in $\bigO(nk)$ time such that
$f(\calS) \geq \max\{\frac{1}{2}, (1-\kappa)^2\} \min\{f(X), f(Y)\}$.
In particular, it is a $\max\{\frac{1}{2}, (1-\kappa)^2\}$-approximation algorithm for \MMSR.
\end{theorem}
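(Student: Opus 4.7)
The plan is to verify that \cref{alg:swap} emits a valid TJ-reconfiguration sequence from $X$ to $Y$, bound its running time, and then establish the two-pronged approximation guarantee $\max\{\tfrac{1}{2},(1-\kappa)^2\}$ on the minimum set value. Well-formedness and complexity are straightforward: each $S^{(i)}$ has cardinality $(k'-i)+i+|R|=k$; the endpoints satisfy $S^{(0)}=X'\uplus R=X$ and $S^{(k')}=Y'\uplus R=Y$; consecutive sets have symmetric difference $\{x_{k'-i+1},y_i\}$ of size exactly $2$, so each transition is a single TJ step, and the sequence has length $k'\leq k$. The two greedy calls each make $O((k')^2)\leq O(nk)$ value-oracle queries on $f$, since evaluating $f_R$ reduces to two queries of $f$.

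For the approximation guarantee, set $A_i\triangleq\{x_1,\ldots,x_{k'-i}\}$, $B_i\triangleq\{y_1,\ldots,y_i\}$, and $g\triangleq f_R$, so that $f(S^{(i)})=f(R)+g(A_i\uplus B_i)$. The residual $g$ is monotone submodular with $g(\emptyset)=0$, and its total curvature is at most $\kappa$: passing from $f$ to $g$ only shrinks the denominator $g(\{e\})=f(R\cup\{e\})-f(R)\leq f(\{e\})$ in the definition of curvature, while the numerator $g([n]\setminus R)-g([n]\setminus R\setminus\{e\})=f([n])-f([n]\setminus\{e\})$ stays fixed. Applying \cref{eq:greedy-ineq} to each of the two greedy sequences yields $g(A_i)\geq\tfrac{k'-i}{k'}g(X')$ and $g(B_i)\geq\tfrac{i}{k'}g(Y')$. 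Set $\alpha\triangleq g(X')$, $\beta\triangleq g(Y')$, and $t\triangleq i/k'\in[0,1]$.

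For the $\tfrac{1}{2}$ bound, monotonicity of $g$ gives $g(A_i\uplus B_i)\geq\max\{(1-t)\alpha,t\beta\}$. This pointwise max of two affine functions is convex in $t$, so its minimum over the grid $\{0,1/k',\ldots,1\}$ is at least its minimum over $[0,1]$, which is attained at the crossing $t^\star=\alpha/(\alpha+\beta)$ with value $\alpha\beta/(\alpha+\beta)\geq\tfrac{1}{2}\min\{\alpha,\beta\}$. Adding back $f(R)$ yields $f(\calS)\geq\tfrac{1}{2}\min\{f(X),f(Y)\}$.

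For the $(1-\kappa)^2$ bound, I would first establish the modular-like lower sandwich $g(S)\geq(1-\kappa)\sum_{e\in S}g(\{e\})$ for any $S$: telescope $g(S)=\sum_i [g(\{s_1,\ldots,s_i\})-g(\{s_1,\ldots,s_{i-1}\})]$ along an arbitrary order and bound each marginal from below by $(1-\kappa)g(\{s_i\})$, using that submodularity lifts the marginal to $g([n]\setminus R)-g([n]\setminus R\setminus\{s_i\})$, which is $\geq (1-\kappa)g(\{s_i\})$ by curvature. Combined with the subadditivity estimate $\sum_{e\in T}g(\{e\})\geq g(T)$ (from $g(\emptyset)=0$ and submodularity), this yields the splitting inequality $g(A_i\uplus B_i)\geq(1-\kappa)(g(A_i)+g(B_i))$. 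Plugging in the greedy bounds and using that a convex combination satisfies $(1-t)\alpha+t\beta\geq\min\{\alpha,\beta\}$ gives $g(A_i\uplus B_i)\geq(1-\kappa)\min\{\alpha,\beta\}$; absorbing the slack $f(R)\geq(1-\kappa)^2 f(R)$ finally produces $f(\calS)\geq(1-\kappa)^2\min\{f(X),f(Y)\}$. The main obstacle I anticipate is carefully verifying that $f_R$ inherits the curvature upper bound and that the discrete $\min_i$ in the $\tfrac{1}{2}$ argument is no smaller than the continuous one; both are resolved by the two observations above, and the remaining steps are routine algebra.
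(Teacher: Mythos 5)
Your proof is correct, and your well-formedness, running-time, and $\frac{1}{2}$-approximation arguments coincide with the paper's: the paper likewise reduces $f(S^{(i)})$ to $\max\{\frac{k'-i}{k'}f_R(X'),\frac{i}{k'}f_R(Y')\}+f(R)$ via monotonicity of $f_R$ and the greedy inequality, and your remark that the discrete minimum over $i$ dominates the continuous minimum at the crossing point $t^\star$ merely makes explicit a step the paper leaves implicit. Where you genuinely diverge is the $(1-\kappa)^2$ bound. The paper introduces the modular surrogate $\overline{f_R}(S)=\sum_{e\in S}f_R(\{e\})$, proves $\overline{f_R}(X_{k'-i}\uplus Y_i)\geq(1-\kappa)\min\{\overline{f_R}(X'),\overline{f_R}(Y')\}$ by a case analysis on whether $\Delta y_i\geq\Delta x_{k'-i+1}$ (pairing and comparing the sorted marginal gains of the two greedy sequences), and then pays a second factor of $(1-\kappa)$ to pass from $\overline{f_R}$ back to $f_R$. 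You instead fold the sandwich $(1-\kappa)\overline{f_R}\leq f_R\leq\overline{f_R}$ into the single splitting inequality $f_R(A\uplus B)\geq(1-\kappa)\bigl(f_R(A)+f_R(B)\bigr)$ and finish with the convex-combination bound $\frac{k'-i}{k'}\alpha+\frac{i}{k'}\beta\geq\min\{\alpha,\beta\}$, which replaces the case analysis entirely. Your route is shorter and, as written, actually yields the stronger guarantee $f(\calS)\geq(1-\kappa)\min\{f(X),f(Y)\}$ before you (harmlessly) weaken it to $(1-\kappa)^2$ to match the statement; it also obliges you to justify $\kappa_{f_R}\leq\kappa_f$, which the paper only asserts, and your numerator/denominator comparison does so correctly.
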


\begin{proof}
Define $R \triangleq X \cap Y$, 
$X' \triangleq X \setminus R$,
$Y' \triangleq Y \setminus R$, and
$k' \triangleq |X'| = |Y'| $.
Let $\langle x_1, \ldots, x_{k'} \rangle$
(resp.~$\langle y_1, \ldots, y_{k'} \rangle$)
denote the greedy sequence returned by
\cref{alg:greedy} invoked on $f_R, X', k'$ (resp.~$f_R, Y', k'$).
For each $i \in \{0\}\cup[k']$,
we define
$X_i \triangleq \{x_1, \ldots, x_i\}$ and
$Y_i \triangleq \{y_1, \ldots, y_i\}$.
Note that $X_0 = Y_0 = \emptyset$, $X_{k'} = X'$, and $Y_{k'} = Y'$.
For each $i \in \{0\}\cup [k']$,
$S^{(i)}$ in the returned reconfiguration sequence  $\calS$ is equal to $X_{k'-i} \uplus Y_i \uplus R$, which is of size $k$.
The correctness of \cref{alg:swap} comes from the fact that
$S^{(i)}$ is obtained from $S^{(i-1)}$
by removing $x_{k'-i+1}$ and adding $y_i$.
The time complexity is apparent.

Showing that
$f(S^{(i)}) \geq \frac{1}{2} \min\{f(X), f(Y)\}$ for
every $i$ now suffices to prove a $\frac{1}{2}$-approximation.
Since the statement is clear if $i = 0, k'$, we will prove for the case of $i \in [k'-1]$.
Observe first that, whenever $i \leq j$, we have that
$
    f_R(X_i) - f_R(X_{i-1}) \geq f_R(X_j) - f_R(X_{j-1})
$
due to \cref{eq:greedy-ineq}.
Hence, for any $i \in [k'-1]$,
we have that
\begin{align*}
    \tfrac{1}{i} {\textstyle \sum\limits_{1 \leq j \leq i}} (f_R(X_j) - f_R(X_{j-1}))
    \geq \tfrac{1}{k'-i} {\textstyle \sum\limits_{i+1 \leq j \leq k'}}
    (f_R(X_j) - f_R(X_{j-1})).
\end{align*}
Simple calculation further yields that
$f_R(X_i) \geq \frac{i}{k'} f_R(X')$,
where we have used the nonnegativity of $f_R(\emptyset)$.
Similarly, we can show that
$f_R(Y_i) \geq \frac{i}{k'} f_R(Y')$ for every $i \in [k'-1]$.
Using the two inequalities on $f_R(X_{k'-i})$ and $f_R(Y_i)$,
we have that for any $i \in [k'-1]$,
\begin{align}
& f(S^{(i)}) = f_R(X_{k'-i} \uplus Y_i) + f(R) \geq \max\{f_R(X_{k'-i}), f_R(Y_i)\} + f(R) \nonumber \\
& \geq \max\bigl\{ \tfrac{k'-i}{k'}f_R(X'), \tfrac{i}{k'}f_R(Y') \bigr\} + f(R) 
= \tfrac{1}{2} \min\{ f(X), f(Y) \}, \label{eq:mmsr-proof}
\end{align}
where the first inequality is due to the monotonicity of $f_R$.
Proving a $(1-\kappa)^2$-approximation is deferred to \cref{app:proofs}.
\end{proof}

\subsubsection*{\textbf{Difficult Instance for \cref{alg:swap}}}
We provide a specific instance of \MMSR for which \cref{alg:swap} returns a $\frac{3}{4}$-approximation reconfiguration sequence, whose proof is deferred to \cref{app:proofs}.
As a by-product, we give evidence that
an optimal reconfiguration sequence can include elements \emph{outside} $X \cup Y$.

\begin{observation}
\label{obs:msr-3-4}
There exists an instance $f, X, Y$ of  \MMSR
such that the optimal reconfiguration sequence $\calS^*$ has value $f(\calS^*) = 1$, and
any reconfiguration sequence $\calS$ that is restricted to include only subsets of $X \cup Y$ has value $f(\calS) \leq \frac{3}{4}$.
Thus, \cref{alg:swap} returns
a $\frac{3}{4}$-approximation reconfiguration sequence for this instance.
\end{observation}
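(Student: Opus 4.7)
The plan is to exhibit a small coverage function for which every high-value reconfiguration from $X$ to $Y$ must route through an auxiliary element outside $X \cup Y$; restricting the sequence to subsets of $X \cup Y$ then forces it through a ``mixed'' pair of strictly smaller value. After normalization the optimum becomes $1$ and the restricted optimum becomes exactly $\frac{3}{4}$.

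Concretely, I would take $k = 2$, ground set $\{a,b,c,d,e\}$, and universe $U = \{1,2,3,4\}$, and define the coverage function $f(S) = |\bigcup_{i \in S} A_i|$ via the cyclic assignment $A_a = \{1,2\}$, $A_b = \{2,3\}$, $A_c = \{3,4\}$, $A_d = \{4,1\}$, together with $A_e = U$; set $X = \{a,c\}$ and $Y = \{b,d\}$. Coverage functions are monotone and submodular, so this is a legal instance of \MMSR. A direct check gives $f(X) = f(Y) = 4$; every pair of $\{a,b,c,d\}$ other than $X$ and $Y$ covers exactly three elements of $U$; and every pair containing $e$ covers all of $U$. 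Hence the sequence $\langle \{a,c\}, \{c,e\}, \{d,e\}, \{b,d\} \rangle$ is a valid \TJ-reconfiguration of value $4$, so after dividing $f$ by $4$ the optimum satisfies $f(\calS^*) = 1$.

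For the restriction claim I would observe that any \TJ-neighbor of $X = \{a,c\}$ inside $X \cup Y = \{a,b,c,d\}$ must remove one of $a,c$ and add one of $b,d$, landing on one of the four mixed pairs, all of which have value $3$; therefore every reconfiguration sequence confined to subsets of $X \cup Y$ has value at most $3$, i.e., at most $\frac{3}{4}$ after normalization. For the algorithmic claim, since $X \cap Y = \emptyset$, \cref{alg:swap} invokes \cref{alg:greedy} separately on $X$ and on $Y$ and outputs the three-set sequence $\langle X, \{x_1,y_1\}, Y \rangle$, whose middle set is one of the mixed pairs; by the restriction argument, its value is exactly $\frac{3}{4}$ against the optimum $1$. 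The only non-routine step is the restriction observation itself, which is immediate from the \TJ adjacency structure: a single \TJ step leaving $X$ while staying inside $X \cup Y$ necessarily exchanges an element of $X$ for one of $Y$, and all such swaps land in the value-$3$ layer by design of the cyclic cover.
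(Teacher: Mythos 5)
Your construction is, up to a relabeling of the ground set and the point at which the normalization by $|U|=4$ is applied, identical to the paper's: the paper uses the coverage function with sets $\{a,b\},\{c,d\},\{a,c\},\{b,d\},\{a,b,c,d\}$, takes $X=\{1,2\}$ and $Y=\{3,4\}$, routes the optimal sequence through the universal fifth element, and notes that any sequence confined to $X\cup Y$ must pass through a mixed pair of value $\frac{3}{4}$. The argument is correct and essentially the same; the only additional (and harmless) detail you supply is the explicit three-set output of \cref{alg:swap} on this instance.
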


\subsection{$\frac{1}{n}$-Approximation Algorithm for \MUSRTJAR}
\label{subsec:usr-tjar-approx}
\begin{algorithm}[tbp]
\caption{$\frac{1}{n}$-approximation algorithm for \MUSRTJAR.}
\label{alg:usr}
\small
\begin{algorithmic}[1]
    \Require
    submodular function $f: 2^{[n]} \to \bbR_+$;
    two subsets $X, Y$ of $[n]$.
    \State invoke \cref{alg:greedy} on $f,X,|X|$ to
    get greedy sequence $\langle x_1, \ldots, x_{|X|} \rangle$.
    \State invoke \cref{alg:greedy} on $f,Y,|Y|$ to
    get greedy sequence $\langle y_1, \ldots, y_{|Y|} \rangle$.
    \State declare empty sequence $\calS = \langle \rangle$.
    \State \textbf{for each} $i = |X| \To 1$ \textbf{do} append $\{x_1, \ldots, x_i\}$ at the end of $\calS$.
    \State \textbf{for each} $i = 1 \To |Y|$ \textbf{do} append $\{y_1, \ldots, y_i\}$ at the end of $\calS$.
    \State \Return sequence $\calS$.
\end{algorithmic}
\end{algorithm}

\cref{alg:usr} describes
the proposed approximation algorithm for \MUSRTJAR.
Given a submodular function $f: 2^{[n]} \to \bbR_+$ and
two subsets $X$ and $Y$ of $[n]$,
it first invokes \cref{alg:greedy} on $f,X,|X|$ and $f,Y,|Y|$ to
obtain the greedy sequences $\langle x_1, \ldots, x_{|X|} \rangle$ and
$\langle y_1, \ldots, y_{|Y|} \rangle$,
respectively.
It then returns the concatenation of a reconfiguration sequence from $X$ to $\{x_1\}$ and 
that from $\{y_1\}$ to $Y$.
Our algorithm is guaranteed to return
a $\frac{1}{n}$-approximation reconfiguration sequence in $\bigO(n^2)$ time
as claimed below.

\begin{theorem}
\label{thm:usr-tjar-approx}
Given a submodular function $f: 2^{[n]} \to \bbR_+$ and two subsets $X$ and $Y$ of $[n] $,
\cref{alg:usr} returns a reconfiguration sequence $\calS$
for \MUSRTJAR of length at most $2n$ in $\bigO(n^2)$ time such that
$f(\calS) \geq \frac{1}{n} \min\{ f(X), f(Y) \}$.
In particular,
it is a $\frac{1}{n}$-approximation algorithm for \MUSRTJAR.
\end{theorem}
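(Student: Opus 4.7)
The plan is to verify three claims in turn: correctness of the output as a reconfiguration sequence under \TJAR, the length/runtime bounds, and the $\frac{1}{n}$ approximation guarantee; only the last is substantive.

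For correctness, observe that $\calS$ is the concatenation of the descending chain $\langle X_{|X|}, X_{|X|-1}, \ldots, X_1\rangle$, where $X_i \triangleq \{x_1,\ldots,x_i\}$, with the ascending chain $\langle Y_1, Y_2, \ldots, Y_{|Y|}\rangle$, where $Y_j \triangleq \{y_1,\ldots,y_j\}$. Consecutive sets in the first chain differ by removing a single element (a \TAR step), consecutive sets in the second chain differ by adding a single element (another \TAR step), and the single bridging transition $\{x_1\} \to \{y_1\}$ is either a \TJ step or a trivial duplicate when $x_1 = y_1$; all these moves are admissible under \TJAR. The length is at most $|X| + |Y| \leq 2n$, and the runtime is dominated by the two invocations of \cref{alg:greedy}, each of which issues at most $n^2$ oracle queries.

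The core of the proof is the bound $f(\calS) \geq \frac{1}{n}\min\{f(X),f(Y)\}$. Since every set in $\calS$ is either some $X_i$ or some $Y_j$, it suffices to prove $f(X_i) \geq \frac{1}{|X|} f(X)$ for every $i \in [|X|]$ (and symmetrically for $Y$). I would apply the greedy inequality \cref{eq:greedy-ineq}, which depends only on submodularity, and repeat the averaging argument from the proof of \cref{thm:msr-approx}: for each $i \in [|X|-1]$,
\begin{align*}
    \tfrac{1}{i}\sum_{j=1}^{i} (f(X_j) - f(X_{j-1})) \geq \tfrac{1}{|X|-i}\sum_{j=i+1}^{|X|}(f(X_j) - f(X_{j-1})),
\end{align*}
which rearranges to $f(X_i) - f(\emptyset) \geq \frac{i}{|X|}(f(X) - f(\emptyset))$, and hence $f(X_i) \geq \frac{i}{|X|}f(X)$ by the nonnegativity $f(\emptyset) \geq 0$ together with $1 - \frac{i}{|X|} \geq 0$ (the case $i = |X|$ is trivial). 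The worst case $i = 1$ yields $f(X_i) \geq \frac{1}{|X|}f(X) \geq \frac{1}{n}f(X)$, and the analogous argument on the greedy sequence for $Y$ gives $f(Y_j) \geq \frac{1}{n} f(Y)$. Finally, any reconfiguration sequence $\calS^*$ from $X$ to $Y$ must contain both $X$ and $Y$, so $f(\calS^*) \leq \min\{f(X), f(Y)\}$, which upgrades the absolute bound to $f(\calS) \geq \frac{1}{n} f(\calS^*)$, giving the claimed approximation ratio.

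I do not anticipate any genuine obstacle here. The one subtlety worth flagging is that the bound $f(X_i) \geq \frac{i}{|X|} f(X)$ must be derived \emph{without} appealing to monotonicity, since $f$ is only assumed submodular in \MUSRTJAR; fortunately, \cref{eq:greedy-ineq} is a pure consequence of the diminishing-returns property, and the rearrangement uses only $f(\emptyset) \geq 0$, so the argument goes through unchanged. Minor edge cases, such as $x_1 = y_1$ (the bridging step collapses into a duplicate) or one of $X, Y$ being small, do not affect any of the inequalities above.
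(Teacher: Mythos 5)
Your proposal is correct, and its skeleton---reduce everything to the claim that $f(X_i) \geq \frac{1}{|X|} f(X)$ for every greedy prefix $X_i$, symmetrize for $Y$, and take the minimum---is the same as the paper's. Where you genuinely diverge is in how that key claim is proved. You recycle the averaging argument from the proof of \cref{thm:msr-approx}: compare the mean of the first $i$ greedy increments with the mean of the remaining $|X|-i$, rearrange, and use $f(\emptyset) \geq 0$ to get $f(X_i) \geq \frac{i}{|X|} f(X)$. The paper instead observes that, by \cref{eq:greedy-ineq}, the increments $f(X_i)-f(X_{i-1})$ are nonincreasing and hence change sign at most once, so the greedy value sequence is unimodal and $f(X_i) \geq \min\{f(X_1), f(X)\}$; it then bounds $f(X_1) = f(\{x_1\}) \geq \frac{1}{|X|} f(X)$ via subadditivity, $f(X) \leq \sum_{i} f(\{x_i\}) \leq |X| \cdot f(\{x_1\})$. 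Both derivations rest only on \cref{eq:greedy-ineq} and nonnegativity, so neither requires monotonicity---precisely the subtlety you flag. Your route is more economical (one lemma serves both theorems) and yields the pointwise stronger bound $\frac{i}{|X|}f(X)$; the paper's route makes the unimodality of the greedy values explicit, which is arguably more transparent about \emph{why} the intermediate sets cannot dip low. Your closing remark that the optimum is at most $\min\{f(X),f(Y)\}$ because every reconfiguration sequence contains $X$ and $Y$ is the step, left implicit in the paper, that converts the absolute bound into the stated approximation ratio; the validity of the bridging \TJ step, the length bound $2n$, and the $\bigO(n^2)$ oracle complexity are handled identically in both arguments.
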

\begin{proof}
Let $\langle x_1, \ldots, x_{|X|} \rangle$
(resp.~$\langle y_1, \ldots, y_{|Y|} \rangle$)
denote the greedy sequence returned by
\cref{alg:greedy} invoked on $f,X,|X|$ (resp.~$f,Y,|Y|$).
For each $i \in \{0\} \cup [|X|]$ (resp.~$i \in \{0\} \cup [|Y|]$),
we define $X_i \triangleq \{x_1, \ldots, x_i\}$
(resp.~$Y_i \triangleq \{y_1, \ldots, y_i\}$).
Observe that the sequence $\calS$ returned by \cref{alg:usr} is a valid reconfiguration sequence for \MUSRTJAR and consists of sets in the form of either $X_i$ or $Y_i$ for some $i \geq 1$.
The time complexity is obvious.

We will show that $f(X_i) \geq \frac{1}{|X|}f(X)$ for every $i \geq 1$.
Since the value of $f(X_i) - f(X_{i-1})$
is monotonically nonincreasing in $i \in [|X|]$ owing to \cref{eq:greedy-ineq},
there exists an index $i^*$ such that
$f(X_i) - f(X_{i-1}) \geq 0$ if $i \leq i^*$ and
$f(X_i) - f(X_{i-1}) \leq 0$ if $i \geq i^*+1$.
In the former case, we have that
$
    f(X_i)
    \geq f(X_1);
$
in the latter case, we have that
$
    f(X_i)
    \geq f(X).
$
Using the inequality that
$f(X) \leq \sum_{i \in [|X|]} f(\{x_i\}) \leq |X| \cdot f(\{x_1\})$,
we obtain that 
$
    f(X_i) \geq \min\{ f(X_1), f(X) \} \geq \frac{1}{|X|} f(X)
$
for any $i \in [|X|]$.
Similarly, we can derive an analogous inequality that for any $i \in [|Y|]$,
$
    f(Y_i) \geq \frac{1}{|Y|} f(Y).
$
Accordingly, we derive that
\begin{align}
\label{eq:musr-proof}
    f(\calS) \geq \min\bigl\{ \tfrac{1}{|X|} f(X), \tfrac{1}{|Y|} f(Y) \bigr\} \geq \tfrac{1}{n} \min\{f(X), f(Y)\},
\end{align}
which completes the proof.
\end{proof}

\subsubsection*{\textbf{Does \cref{alg:swap} Work on \MUSRTJAR?}}
\cref{alg:usr}'s approximation factor of $\frac{1}{n}$
is not fascinating compared to a $\frac{1}{2}$-factor of \cref{alg:swap} on \MMSR.
One might wonder if \cref{alg:swap} generates a good reconfiguration sequence \emph{on} \MUSRTJAR, assuming that $|X|=|Y|$.
However, we have bad news that \cref{alg:swap} does not have \emph{any} positive approximation factor for \MUSRTJAR,
whose proof is deferred to \cref{app:proofs}.

\begin{observation}
\label{obs:musrtjar}
There exists an instance $f,X,Y$ of \MUSRTJAR with $|X|=|Y|$ such that
\cref{alg:usr} and \cref{alg:swap} return a reconfiguration sequence of value $1$ and $0$, respectively.
\end{observation}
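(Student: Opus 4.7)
My plan is to exhibit an explicit four-element instance built from a weighted cut function. Take the ground set $[4]$ and let $f : 2^{[4]} \to \bbR_+$ be the cut function of the graph on $[4]$ with edges $\{1,3\}$ of weight $1$ and $\{2,4\}$ of weight $\tfrac{1}{2}$; that is, $f(S)$ is the total weight of edges having exactly one endpoint in $S$. Cut functions are well known to be nonnegative and submodular, so this is a legitimate instance of \MUSRTJAR. Set $X \triangleq \{1,2\}$ and $Y \triangleq \{3,4\}$, which are disjoint and of equal size $2$.

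Next, I would trace \cref{alg:usr} on this instance. The greedy subroutine on $(f, X, 2)$ picks $x_1 = 1$ (since $f(\{1\}) = 1 > \tfrac{1}{2} = f(\{2\})$) and then $x_2 = 2$, and symmetrically yields $\langle 3, 4 \rangle$ on $Y$. The output sequence is $\langle \{1,2\}, \{1\}, \{3\}, \{3,4\} \rangle$, in which each consecutive pair is adjacent under \TJAR (a \TAR removal, a \TJ jump, and a \TAR addition, respectively). Its function values are $\tfrac{3}{2}, 1, 1, \tfrac{3}{2}$, so the minimum is $1$.

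Then I would trace \cref{alg:swap}. Since $R = X \cap Y = \emptyset$, we have $f_R = f$, $X' = X$, $Y' = Y$, and $k' = 2$, so the greedy subroutines are identical to those above, yielding $\langle 1, 2 \rangle$ and $\langle 3, 4 \rangle$. The resulting output is $\langle \{1,2\}, \{1,3\}, \{3,4\} \rangle$. The crucial observation is that $\{1,3\}$ cuts \emph{neither} edge of the graph (the edge $\{1,3\}$ is internal and $\{2,4\}$ is external), so $f(\{1,3\}) = 0$ and hence $f(\calS) = 0$.

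I do not foresee any significant obstacle. The only delicate step is ensuring that greedy makes deterministic choices, which I address with the asymmetric edge weights $1$ and $\tfrac{1}{2}$; otherwise the verification is just arithmetic on a four-element ground set. The underlying intuition to present is that \cref{alg:swap} produces a straight-line swap that must pass through mixed sets of the form $(X \setminus A) \cup B$ with $A \subseteq X$ and $B \subseteq Y$, and such mixtures can be arbitrarily low-valued when $X$ and $Y$ sit on opposite sides of a natural submodular cut, whereas \cref{alg:usr} first shrinks down to a high-value singleton and then grows back up, sidestepping the bad mixtures entirely.
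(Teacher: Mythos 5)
Your construction is exactly the $n=4$ instantiation of the paper's own example (a weighted cut function on a perfect matching between $X$ and $Y$ with distinct weights $1$ and $\tfrac12$ to break greedy ties), and your traces of both algorithms check out: \cref{alg:usr} yields $\langle \{1,2\},\{1\},\{3\},\{3,4\}\rangle$ with value $1$, while \cref{alg:swap} passes through $\{1,3\}$, which cuts no edge and has value $0$. This is correct and essentially the same approach as the paper.
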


\subsection{Difficulty in Designing Approximation Algorithms for \MUSRTAR}
\label{subsec:usr-tar-approx}

Unfortunately, \cref{alg:usr} designed for \MUSRTJAR does not
produce a reconfiguration sequence for \MUSRTAR because
we cannot transform from $\{x_1\}$ to $\{y_1\}$ directly by a \TAR step.
Here, we explain what makes it so challenging to design approximation algorithms for \MUSRTAR.
\cref{eq:mmsr-proof,eq:musr-proof} in the proofs of \cref{thm:msr-approx,thm:usr-tjar-approx} indicate that
if $f(X)$ and $f(Y)$ are positive,
then there must exist a reconfiguration sequence $\calS$ whose value is positive (which can be found efficiently).
Such a feature is critical for proving
$f(\calS) \geq \rho \cdot \min\{f(X), f(Y)\}$ for some positive $\rho > 0$.
We show, however, that this is not the case for \MUSRTAR; i.e., it can be impossible to transform from $X$ to $Y$ without ever touching \emph{zero-value} sets,
whose proof is deferred to \cref{app:proofs}.

\begin{observation}
\label{obs:usr-tar-approx}
There exists an instance $f,X,Y$ of \MUSRTAR
such that $f(X) = f(Y) = 1$ and 
every reconfiguration sequence $\calS$ has value $f(\calS) = 0$.
\end{observation}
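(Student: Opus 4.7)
The plan is to exhibit a minimal explicit instance on two elements and check everything by hand. I would take $n = 2$, set $X \triangleq \{1\}$ and $Y \triangleq \{2\}$, and define $f : 2^{[2]} \to \bbR_+$ by $f(\emptyset) = f(\{1,2\}) = 0$ and $f(\{1\}) = f(\{2\}) = 1$. This immediately gives $f(X) = f(Y) = 1$, as required.

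Next I would verify that $f$ is submodular via the diminishing-returns characterization recalled in \cref{subsec:defining}. Only two nontrivial chain conditions arise, one per element: the marginal contribution of adding element $2$ to $\emptyset$ is $+1$, while its marginal contribution to $\{1\}$ is $-1$, so $+1 \geq -1$ holds; the check for element $1$ is symmetric. Nonnegativity of $f$ is immediate.

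Finally I would invoke a parity argument on cardinality. A single \TAR step changes $|S|$ by exactly $\pm 1$. Since $|X| = |Y| = 1$ but $X \neq Y$, any reconfiguration sequence from $X$ to $Y$ has length at least $2$, and every intermediate set must have cardinality of the opposite parity, i.e., $0$ or $2$. In the ground set $[2]$ the only such sets are $\emptyset$ and $\{1,2\}$, both of $f$-value $0$. Hence $f(\calS) = 0$ for every reconfiguration sequence $\calS$ from $X$ to $Y$.

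The main conceptual point, rather than a real technical obstacle, is precisely the parity observation: it is the \TAR adjacency (as opposed to \TJ, which would permit the direct step $\{1\} \to \{2\}$) that forces a sequence to cross cardinality strata, leaving us free to plant zero-value sets on an entire stratum without breaking submodularity. Verifying submodularity of so tiny a function is routine, and the ``dip'' imposed at $\{1,2\}$ is the sharpest one compatible with DR given the singleton values, which is why the construction is essentially forced.
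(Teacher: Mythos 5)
Your proposal is correct and uses exactly the same instance and argument as the paper's proof: the function $f$ on $[2]$ with $f(\emptyset)=f(\{1,2\})=0$ and $f(\{1\})=f(\{2\})=1$, together with the observation that any \TAR sequence from $\{1\}$ to $\{2\}$ must pass through $\emptyset$ or $\{1,2\}$. Your explicit submodularity check and parity phrasing are just slightly more detailed versions of what the paper leaves implicit.
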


\section{Inapproximability}
\label{sec:inapprox}

In this section,
we devise inapproximability results of
\MMSR and \MUSR,
which reveal an \emph{upper bound} of approximation guarantees that polynomial-time algorithms can achieve.
We first prove that it is \PSPACE-hard to approximate the optimal value of \MMSR within a factor of $(1-\frac{1+\epsilon}{n^2})$ (\cref{subsec:msr-inapprox}; \cref{thm:msr-inapprox}), which is slightly stronger than \cref{thm:msr-pspace}.
Though this factor asymptotically approaches $1$ (as $n$ goes to infinity),
the result rules out the existence of a fully polynomial-time approximation scheme, assuming that \cP~$\neq$~\PSPACE (which is a weaker assumption than \cP~$\neq$~\NP).
A \emph{fully polynomial-time approximation scheme (FPTAS)} is 
an approximation algorithm that takes a precision parameter $\epsilon>0$ and returns
a $(1-\epsilon)$-approximation in polynomial time in the input size and $\epsilon^{-1}$.
We then show that both versions of \MUSR cannot be approximated within a factor of $(\frac{5}{6}+\epsilon)$ for any $\epsilon > 0$ by using exponentially many oracle calls in $n$ and $\epsilon$, without making a complexity-theoretic assumption (\cref{subsec:usr-inapprox}; \cref{thm:usr-tjar-inapprox,thm:usr-tar-inapprox}).

\subsection{Inapproximability Result of \MMSR}
\label{subsec:msr-inapprox}

The first result is shown below,
whose proof appears in \cref{app:proofs}.

\begin{theorem}
\label{thm:msr-inapprox}
It is \PSPACE-hard to approximate the optimal value of \MMSR
within a factor of $(1-\frac{1+\epsilon}{n^2})$ for any $\epsilon > 0$,
where $n$ is the size of the ground set.
In particular, an FPTAS for \MMSR does not exist unless
\cP $=$ \PSPACE.
\end{theorem}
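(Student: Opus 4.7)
The plan is to upgrade the polynomial-time reduction from \MVCR used in \cref{thm:msr-pspace} into a \emph{gap-preserving} reduction, and then pad the ground set so that the absolute one-unit gap translates into the desired multiplicative factor $1 - (1+\epsilon)/n^2$. Recall that the \MSR reduction takes a graph $G = (V,E)$ with minimum vertex covers $C^x, C^y$ of size $k$ and uses the monotone submodular coverage function $f(S) \triangleq |\{e \in E : e \cap S \neq \emptyset\}|$. A $k$-subset $S$ attains $f(S) = |E|$ exactly when $S$ is a vertex cover, and since $k$ is the minimum cover number of $G$, such an $S$ must be a minimum vertex cover. Consequently, if \MVCR is a yes-instance, the optimal \MMSR value is exactly $|E|$; if it is a no-instance, every reconfiguration sequence in $\binom{V}{k}$ must visit some non-vertex-cover, dropping the minimum value to at most $|E|-1$.

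To turn this one-unit gap into the factor $1 - (1+\epsilon)/n^2$, I would pad the ground set with $M$ isolated vertices $u_1,\ldots,u_M$, extend $f$ by $f(S) \triangleq |\{e \in E : e \cap (S \cap V) \neq \emptyset\}|$, and choose $M$ so that the padded ground-set size $n \triangleq |V|+M$ satisfies $n^2 > (1+\epsilon)|E|$. Since $|E| \leq |V|^2$, taking $M$ linear in $|V|$ suffices (with a constant depending on $\epsilon$) and keeps the reduction polynomial-time. The key sanity check is that padding cannot enlarge the family of value-$|E|$ solutions: any $k$-subset $S$ of $V \cup \{u_1,\ldots,u_M\}$ with $f(S) = |E|$ forces $S \cap V$ to be a vertex cover of size at most $k$, which by minimality of $k$ implies $S \subseteq V$ and that $S$ is a minimum vertex cover. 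Hence value-$|E|$ reconfiguration sequences in the padded \MMSR instance correspond bijectively to \MVCR sequences in $G$, preserving the yes/no gap.

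Combining these steps, the choice of $M$ gives $(1-(1+\epsilon)/n^2)\cdot|E| > |E|-1$, so any polynomial-time $(1-(1+\epsilon)/n^2)$-approximation algorithm for \MMSR would decide \MVCR, contradicting \cref{lem:rvcr-pspace}. The FPTAS corollary then follows by instantiating a hypothetical FPTAS with precision parameter $\epsilon' \triangleq (1+\epsilon)/n^2$ for any fixed $\epsilon > 0$, producing a $(1-\epsilon')$-approximation in time polynomial in $n$ and $1/\epsilon' = n^2/(1+\epsilon)$, and thus polynomial in $n$, violating the hardness just established. The main obstacle is ensuring that additional isolated elements cannot open up new value-$|E|$ reconfiguration paths; this is precisely where reducing from \MVCR, which fixes \emph{minimum} vertex covers as endpoints, rather than from a generic vertex-cover reconfiguration problem, is essential to the argument.
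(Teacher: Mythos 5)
Your proof is correct and follows essentially the same route as the paper: reuse the reduction from \MVCR in \cref{thm:msr-pspace}, observe the additive gap between $|E|$ (for yes-instances, where the optimal sequence consists of minimum vertex covers) and $|E|-1$ (for no-instances), and convert it into the multiplicative factor using $|E| = \bigO(n^2)$. The only difference is your explicit padding with isolated vertices to force $(1+\epsilon)|E| < n^2$; the paper skips this and directly asserts $\frac{|E|-1+\epsilon}{|E|} \leq 1-\frac{1+\epsilon}{n^2}$, which already holds since $|E| \leq {n \choose 2}$ in the relevant regime of small $\epsilon$, so your version is, if anything, slightly more careful about large $\epsilon$.
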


\subsection{Inapproximability Results of \MUSR}
\label{subsec:usr-inapprox}

\begin{theorem}
\label{thm:usr-tjar-inapprox}
For any $\epsilon > 0$,
there is no $(\frac{5}{6}+\epsilon)$-approximation algorithm for \MUSRTJAR making at most $\rme^{\epsilon^2 n/2}$ oracle calls.
\end{theorem}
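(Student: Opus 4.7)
The plan is to use the standard value-oracle indistinguishability framework (Yao's minimax principle combined with a symmetry-gap construction in the style of Feige--Mirrokni--Vondr\'ak). I will exhibit a distribution over instances of \MUSRTJAR on the ground set $[n]$, all sharing a common optimum value $v^\star$, such that with constant probability any deterministic algorithm making at most $\rme^{\epsilon^2 n/2}$ value-oracle queries must output a reconfiguration sequence of value at most $\bigl(\tfrac{5}{6}+\epsilon\bigr) v^\star$.

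First I would construct a constant-size submodular ``gadget'' $g \colon 2^U \to \bbR_+$ together with endpoints $X_0, Y_0 \subseteq U$ satisfying: (i) there is a reconfiguration sequence from $X_0$ to $Y_0$ under \TJAR of value $v^\star$ in $g$; and (ii) every submodular function on $U$ that is invariant under a suitable group action swapping the two ``halves'' of the endpoints admits no \TJAR reconfiguration sequence from $X_0$ to $Y_0$ of value exceeding $\tfrac{5}{6} v^\star$. The specific numerical factor $\tfrac{5}{6}$ has to be engineered into the gadget; a small explicit example on a ground set of size $6$ or $12$ is the natural place to look, tuned so that the asymmetric sequence achieves value $6$ while any ``balanced'' alternative gets stuck at $5$.

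Next I would lift $g$ to $[n]$ by drawing a uniformly random equi-partition $(A_u)_{u \in U}$ of $[n]$ and setting
\[
    f(S) \;\triangleq\; \hat g\Bigl( \bigl( |S \cap A_u| / |A_u| \bigr)_{u \in U} \Bigr),
\]
where $\hat g$ is the multilinear (or Lov\'asz) extension of $g$; this construction preserves submodularity. The endpoints are $X = \bigcup_{u \in X_0} A_u$ and $Y = \bigcup_{u \in Y_0} A_u$. For any fixed query $S \subseteq [n]$, a Chernoff bound gives $|f(S) - \bar f(S)| \le \epsilon$ except with probability at most $2\rme^{-\Omega(\epsilon^2 n)}$, where $\bar f(S) \triangleq \bbE[f(S)]$ depends only on $|S|$ and corresponds to the fully symmetric ``blurred'' version of $g$. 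Union-bounding over the at most $\rme^{\epsilon^2 n/2}$ queries the algorithm can make (via the standard decision-tree simulation of adaptivity) shows that with positive probability the algorithm's entire transcript is consistent with $\bar f$.

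Since $\bar f$ inherits the symmetry of (ii), any algorithm that is consistent with $\bar f$ can only certify a reconfiguration sequence of value at most $\tfrac{5}{6} v^\star$, whereas the true $f$ admits one of value $v^\star - o(1)$ by (i). Yao's minimax principle then converts this distributional statement into the claimed worst-case lower bound. The main obstacle will be step one: engineering the gadget $g$ so that the \TJAR reconfiguration gap is exactly $\tfrac{5}{6}$ while keeping $g$ submodular and making sure its multilinear extension $\hat g$ retains the gap after the random lifting; the remaining ingredients---Chernoff concentration, union bound over the query transcript, and Yao's minimax lemma---are entirely standard once the gadget is in hand.
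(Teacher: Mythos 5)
Your proposal takes a genuinely different route from the paper, but it has an essential gap: the entire content of the theorem---the specific constant $\frac{5}{6}$---is deferred to a gadget that you never construct, and you yourself flag this as ``the main obstacle.'' Without an explicit submodular $g$ and endpoints $X_0, Y_0$ exhibiting a \TJAR reconfiguration symmetry gap of exactly $\frac{5}{6}$, there is no proof; the Chernoff/union-bound/Yao machinery you describe is indeed standard, but it is scaffolding around a hole. There is also a secondary issue you would need to address even with a gadget in hand: for a reconfiguration problem the algorithm outputs a \emph{sequence}, and the indistinguishability argument must show that any sequence produced by an algorithm whose transcript is consistent with the symmetrized function $\bar f$ necessarily contains a set of low true value under the hidden partition. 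That step does not follow immediately from the symmetry of $\bar f$ alone and requires its own argument.

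The paper avoids building any new symmetry-gap gadget by giving an approximation-preserving reduction \emph{from} \USM: given $f$ on $[n]$, it adds four fresh elements $V=\{x_1,x_2,y_1,y_2\}$ and defines $g(T)=\frac{\Upsilon}{2}c(T\cap V)+f(T\cap[n])$, where $c$ is the cut function of $K_{2,2}$ and $\Upsilon\approx 2\,\OPT$. The endpoints $X=\{x_1,x_2\}$ and $Y=\{y_1,y_2\}$ each have cut contribution $2\Upsilon$, but any \TJAR path between them must pass through a set whose cut contribution drops to $\Upsilon$, so the optimal reconfiguration value is exactly $\Upsilon+\OPT$, and the ratio $\frac{\Upsilon+(\frac12+\epsilon)\OPT}{\Upsilon+\OPT}\le\frac56+\epsilon'$ is where $\frac56$ comes from. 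The query lower bound is then imported as a black box from the known hardness of $(\frac12+\epsilon)$-approximating \USM (Feige, Mirrokni, and Vondr\'ak). If you want to salvage your approach, the most direct path is to realize that the gadget you are searching for is essentially this $K_{2,2}$ construction glued onto the \emph{existing} \USM symmetry-gap instance---at which point you have rederived the paper's reduction and the direct Yao-style argument becomes unnecessary.
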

\begin{proof}
We show a reduction from \USM in an approximation-preserving manner.
Suppose we are given a submodular function
$f: 2^{[n]} \to \bbR_+$ and a number $\epsilon > 0$, and we wish to find
a $(\frac{1}{2} + \epsilon)$-approximation for \USM.
We first compute a $\frac{1}{2}$-approximation $\hat{O}$ in polynomial time \cite{buchbinder2018deterministic}, and
we define $\Upsilon \triangleq (2+2 \epsilon) f(\hat{O}) $.
We have that $(1+\epsilon) \OPT \leq \Upsilon \leq (2+2\epsilon) \OPT$,
where $\OPT \triangleq \max_{S \subseteq [n]} f(S)$ (which is unknown).
We can safely assume that $\Upsilon > 0$ because otherwise we can declare that the optimal value is $\OPT = 0$.
Define $V \triangleq \{x_1,x_2,y_1,y_2\}$ and $N \triangleq [n] \uplus V$.
We then construct a submodular function $g: 2^{N} \to \bbR_+$ such that
$g(T) \triangleq \frac{\Upsilon}{2} \cdot c(T \cap V)  + f(T \cap [n])$ for each $T \subseteq N$, where
$c$ is a \emph{cut function} on graph $G=(V,E)$ with $E = \{(x_1,y_1), (x_1,y_2), (x_2,y_1), (x_2,y_2)\}$.
Since $\frac{\Upsilon}{2} \cdot c(\cdot)$ takes either of $0, \Upsilon, 2 \Upsilon$ as a value and
$f(\cdot)$ takes a value within the range of $[0, \OPT]$,
we have the following relation between $g(T)$ and $c(T \cap V)$:
\begin{itemize}
    \item[\textbf{1.}]if $0 \leq g(T) < \Upsilon$, then $\frac{\Upsilon}{2} \cdot c(T \cap V) = 0$;
    \item[\textbf{2.}]if $\Upsilon \leq g(T) < 2\Upsilon$, then $\frac{\Upsilon}{2} \cdot c(T \cap V) = \Upsilon$;
    \item[\textbf{3.}]if $g(T) \geq 2\Upsilon$, then $\frac{\Upsilon}{2} \cdot c(T \cap V) = 2 \Upsilon$.
\end{itemize}

Consider now \MUSRTJAR defined by
$g, X\triangleq \{x_1,x_2\}, Y\triangleq \{y_1,y_2\}$.
Note that $g(X)=g(Y) = 2 \Upsilon + f(\emptyset)$.
Since we are allowed to use \TAR and \TJ steps,
for any $S \subseteq [n]$,
we can construct a reconfiguration sequence $\calS$ whose value is 
$g(\calS) = \Upsilon + f(S)$:
an example of such a sequence is
$\langle \{x_1,x_2\}, \cdots $ adding elements of $S$ one by one $\cdots, \{x_1,x_2\} \cup S, \{x_1\} \cup S, \{y_1\} \cup S, \{y_1,y_2\} \cup S, \cdots $
removing elements of $S$ one by one $\cdots, \{y_1,y_2\} \rangle$.
Since we cannot transform from $\{x_1,x_2\}$ to $\{y_1,y_2\}$
without ever touching $T$ such that
$\frac{\Upsilon}{2} \cdot c(T \cap V) \leq \Upsilon$,
the optimal value for the \MUSRTJAR instance must be $\Upsilon + \OPT$.
Conversely, if a reconfiguration sequence $\calS$ has a value $g(\calS) > \Upsilon$,
we would be able to find a set $S^{(i)} \in \calS$ such that $ f(S^{(i)} \cap [n]) = g(\calS) - \Upsilon$.
In particular,
given a $(\frac{5}{6}+\epsilon')$-approximation algorithm for \MUSRTJAR,
we can find $(\frac{1}{2}+\epsilon)$-approximation for \USM by setting
$\epsilon' = \frac{4\epsilon}{9+6\epsilon} > 0$
because
$\frac{\Upsilon + (\frac{1}{2} +\epsilon)\OPT}{\Upsilon + \OPT}
\leq \frac{5}{6}+\epsilon'$.
Since no algorithm making fewer than $\rme^{\epsilon^2 n/8} $ oracle calls cannot find a $(\frac{1}{2}+\epsilon)$-approximation to \USM \cite[Theorem 4.5]{feige2011maximizing},
there is no $(\frac{5}{6}+\epsilon')$-approximation algorithm for \MUSRTJAR making fewer than
$\rme^{ (\frac{9 \epsilon'}{4-6\epsilon'})^2 \frac{n}{8} }$ oracle calls, which is more than
$\rme^{{\epsilon'}^2 n/2}$,
completing the proof.
\end{proof}

The last inapproximability result is presented below,
whose proof is similar to that of \cref{thm:usr-tjar-inapprox} and deferred to \cref{app:proofs}.
\begin{theorem}
\label{thm:usr-tar-inapprox}
For any $\epsilon > 0$,
there is no $(\frac{5}{6}+\epsilon)$-approximation algorithm for \MUSRTAR making at most $\rme^{\epsilon^2 n/2}$ oracle calls.
\end{theorem}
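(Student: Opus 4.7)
The plan is to rerun the reduction from Theorem~\ref{thm:usr-tjar-inapprox} essentially verbatim, changing only the two places where the argument uses a \TJ step. Concretely, given a \USM instance $f : 2^{[n]} \to \bbR_+$ and $\epsilon > 0$, I would compute a $\frac{1}{2}$-approximation $\hat{O}$ in polynomial time, set $\Upsilon \triangleq (2+2\epsilon) f(\hat{O})$ so that $(1+\epsilon)\OPT \leq \Upsilon \leq (2+2\epsilon)\OPT$, take $V \triangleq \{x_1,x_2,y_1,y_2\}$ with cut function $c$ of the complete bipartite graph $K_{2,2}$ on the bipartition $\{x_1,x_2\} \cup \{y_1,y_2\}$, put $N \triangleq [n] \uplus V$, and define $g(T) \triangleq \tfrac{\Upsilon}{2} c(T \cap V) + f(T \cap [n])$. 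The \MUSRTAR instance is then $g, X \triangleq \{x_1,x_2\}, Y \triangleq \{y_1,y_2\}$, with $g(X)=g(Y) = 2\Upsilon + f(\emptyset)$.

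The first step is the upper bound on the optimal value. The only subsets of $V$ with $c = 4$ are $\{x_1,x_2\}$ and $\{y_1,y_2\}$, and a single \TAR step can change the $V$-part by at most one element, so it cannot take one of these to the other directly. Hence every \TAR reconfiguration sequence from $X$ to $Y$ must contain an intermediate $T^{(i)}$ whose $V$-projection has $c \leq 2$, giving $g(T^{(i)}) \leq \Upsilon + f(T^{(i)} \cap [n]) \leq \Upsilon + \OPT$. This bounds the optimal value by $\Upsilon + \OPT$.

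The second step is the matching construction. I would take the \TJAR sequence used in the proof of Theorem~\ref{thm:usr-tjar-inapprox} and replace its single \TJ step $\{x_1\} \cup S \to \{y_1\} \cup S$ by the two \TAR steps $\{x_1\} \cup S \to \{x_1,y_1\} \cup S \to \{y_1\} \cup S$. Since $c(\{x_1,y_1\}) = 2$, the inserted set has $g$-value $\Upsilon + f(S)$, identical to its neighbors, so the bottleneck is preserved at $\Upsilon + f(S)$. Taking $S$ optimal gives a \TAR reconfiguration sequence of value $\Upsilon + \OPT$.

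With the optimum of \MUSRTAR pinned to $\Upsilon + \OPT$, the rest of the argument transfers verbatim: a $(\tfrac{5}{6}+\epsilon')$-approximation for \MUSRTAR would yield a $(\tfrac{1}{2}+\epsilon)$-approximation for \USM via the same choice $\epsilon' = \tfrac{4\epsilon}{9+6\epsilon}$, contradicting the oracle lower bound of \cite[Theorem~4.5]{feige2011maximizing} and giving the claimed $\rme^{\epsilon^2 n / 2}$ bound on oracle calls. The only delicate point—and what I expect to be the sole non-routine check—is the rerouting of the former \TJ step through $\{x_1,y_1\} \cup S$: it relies on the fact that every \TAR path in $2^V$ between $\{x_1,x_2\}$ and $\{y_1,y_2\}$ that stays away from the ``bad'' states $\emptyset$ and $V$ can be realized with $c \geq 2$ at every visited vertex, which is precisely why the $K_{2,2}$ gadget (rather than, say, a matching) is used.
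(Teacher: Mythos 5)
Your proposal is correct and matches the paper's proof essentially verbatim: the paper likewise reuses the \MUSRTJAR reduction and replaces the single \TJ step by the two \TAR steps through $\{x_1,y_1\} \cup S$, which has cut value $2$ and hence $g$-value $\Upsilon + f(S)$. The upper bound and the final parameter translation carry over unchanged, exactly as you argue.
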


\section{Numerical Study}
\label{sec:exp}
We report numerical study on \MSR and \USRTJAR using real-world data.
We first applied \cref{alg:swap} for \MMSR to influence maximization reconfiguration.
We discover that \cref{alg:swap} quickly returns a reconfiguration sequence whose value is substantially better than the worst-case guarantee (\cref{subsec:infmax}).
We second applied \cref{alg:usr} for \USRTJAR to MAP inference reconfiguration on determinantal point processes.
We find that \cref{alg:usr}'s value is nine times smaller than the optimal value (\cref{subsec:map}).
We implemented an A* search algorithm for \MSR and \USR as a baseline
(see \cref{app:astar}
for details),
which was found to make more oracle calls than \cref{alg:swap,alg:usr}.
Experiments were conducted on a Linux server with Intel Xeon E5-2699 2.30GHz CPU and 792GB RAM.
All algorithms were implemented in Python~3.7.

\subsection{Influence Maximization Reconfiguration}
\label{subsec:infmax}
\subsubsection{\textbf{Problem Description}}
We formulate the reconfiguration of influence maximization as \MSR.
\emph{Influence maximization} \cite{kempe2003maximizing} requests to identify a fixed number of \emph{seed} vertices that maximize the spread of influence in a social network.
We adopt the \emph{independent cascade} model \cite{goldenberg2001talk} to
specify the process of network diffusion.
Given an \emph{influence graph} $G=(V,E,p)$,
where $p: E \to [0,1]$ is an edge probability function,
we consider the distribution over subgraphs $(V,E')$ obtained by maintaining each edge $e$ of $E$ with probability $p(e)$.
We say that a seed set $S \subseteq V$ \emph{activates}
a vertex $v \in V$ if $S$ can reach $v$ in $(V,E')$;
an objective function called the \emph{influence spread} $\Inf(S)$ is defined as 
the expected number of vertices that have been activated by $S$.
Since $\Inf(\cdot)$ is monotone and submodular \cite{kempe2003maximizing},
the reconfiguration version of influence maximization corresponds to \MSR, whose motivation was described in \cref{sec:intro}.

\subsubsection{\textbf{Setup}}
We prepare an influence graph $G=(V,E,p)$ and two input sets $X$ and $Y$.
We used two publicly-available social network data,
\karate network\footnote{\url{http://konect.cc/networks/ucidata-zachary/}} with $34$ vertices and $78$ bidirectional edges, and
\physicians network\footnote{\url{http://konect.cc/networks/moreno_innovation/}} with $117$ vertices and $542$ directed edges, from Koblenz Network Collection \cite{kunegis2013konect}.
We set the probability $p(u,v)$ of edge $(u,v) \in E$ to the inverse of the in-degree of $v$, which was adopted in  \cite{tang2014influence,arora2017debunking,ohsaka2020solution}.
Since exact computation of $\Inf(\cdot)$ is \shP-hard \cite{chen2010scalable},
we used the approximation scheme in \cite[\S 5.2]{ohsaka2020solution} to
construct a monotone submodular function $f:2^V \to \bbR_+$
from $10^5$ reverse reachable sets \cite{borgs2014maximizing,tang2014influence},
which provides an \emph{unbiased estimate} for the influence spread.
We constructed $X$ and $Y$ 
so that they are disjoint and moderately influential.
To that end, we ran the greedy algorithm \emph{interchangeably}:
Beginning with $X_0 \triangleq \emptyset$ and $Y_0 \triangleq \emptyset$,
we compute $X_i$ and $Y_i$ for $i \geq 1$ as follows:
\begin{small}
\begin{align}
\label{eq:exp-subsets}
    X_i & \triangleq X_{i-1} \cup \Bigl\{ \argmax_{e \in [n] \setminus (X_{i-1} \cup Y_{i-1})} f(X_{i-1} \cup \{e\}) \Bigr\}, \\
    Y_i & \triangleq Y_{i-1} \cup \Bigl\{ \argmax_{e \in [n] \setminus (X_i \cup Y_{i-1})} f(Y_{i-1} \cup \{e\}) \Bigr\}.
\end{align}
\end{small}
On \karate, we define
$X \triangleq X_{8}$ and $Y \triangleq Y_{8}$,
where $f(X) = 23.2$ and $f(Y) = 23.6$,
which are drawn in \cref{fig:karate-vis}.
On \physicians, we define
$X \triangleq X_{16}$ and $Y \triangleq Y_{16}$,
where $f(X) = 93.2$ and $f(Y) = 93.3$.

\begin{figure}
    \centering
    \subfloat[\karate network]{%
    \includegraphics[width=0.5\hsize]{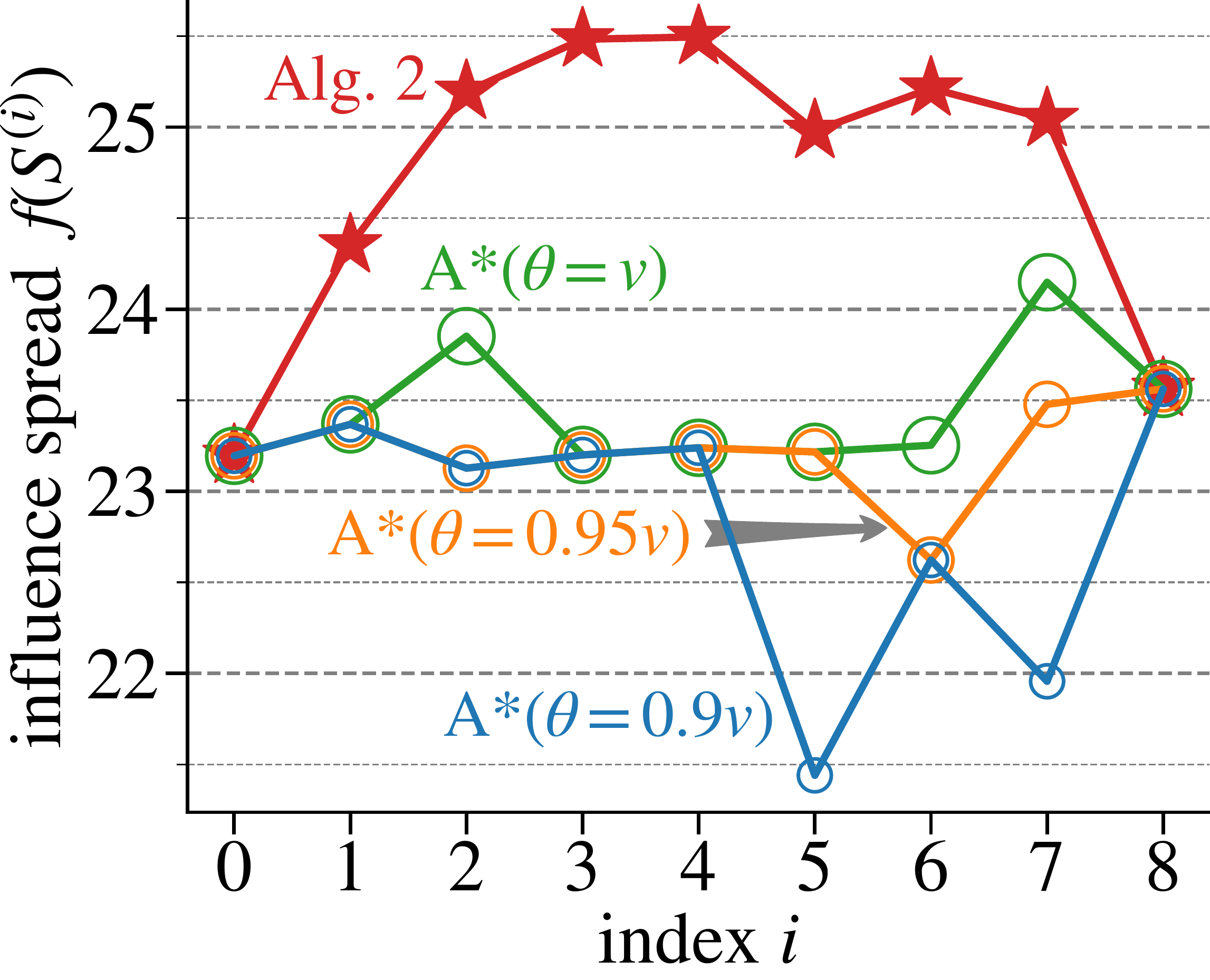}%
    }%
    \subfloat[\physicians network]{%
    \includegraphics[width=0.5\hsize]{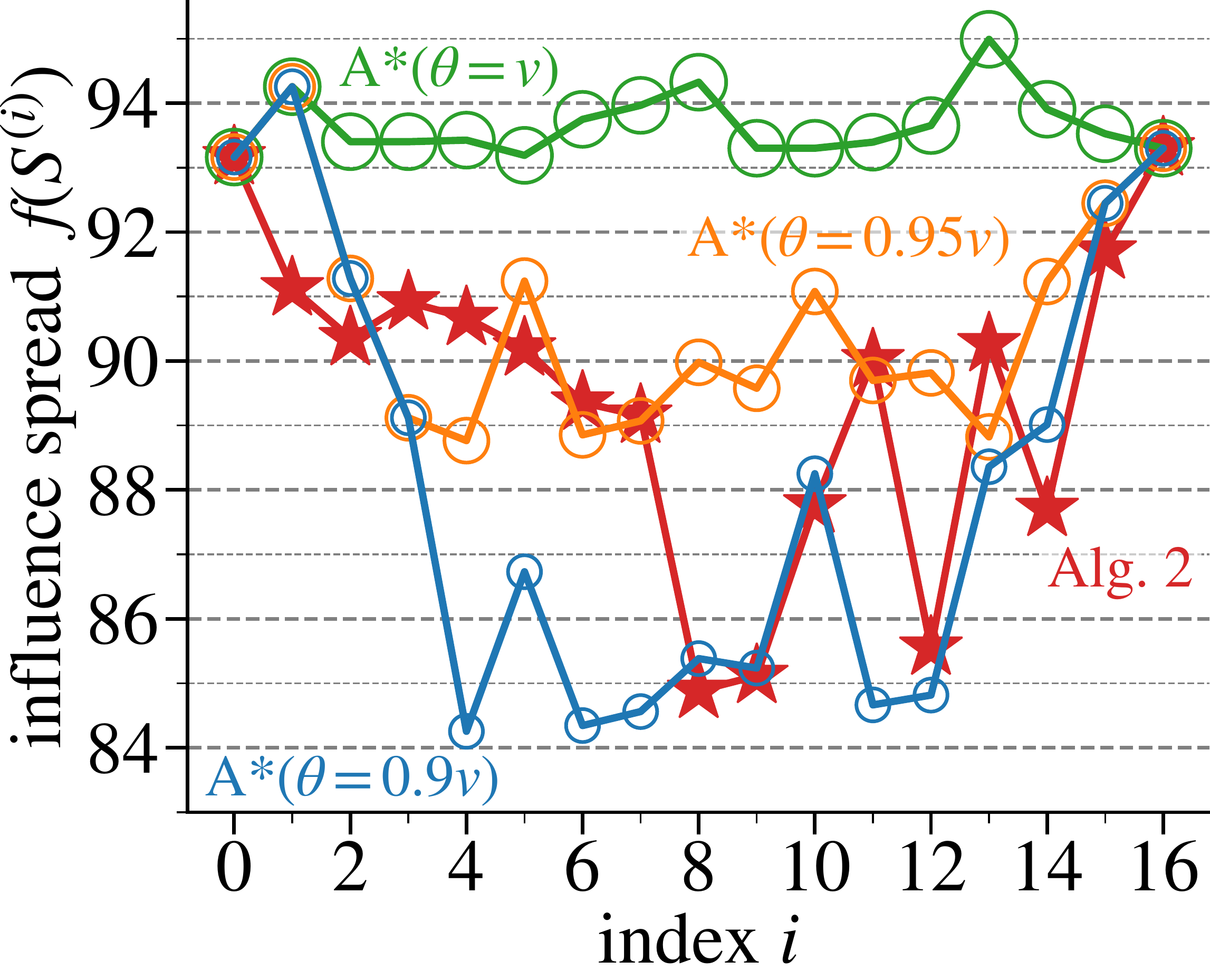}%
    }%
    \caption{Influence spread of each set in the reconfiguration sequences returned by A* algorithm ($\theta=0.9v,0.95v,v$) and \cref{alg:swap}, where  $v = \min\{f(X), f(Y)\}$.}
    \label{fig:msreco}
\end{figure}

\subsubsection{\textbf{Results}}
We ran \cref{alg:swap} and the A* algorithm with
$\theta = 0.9v, 0.95v, v$,
where $v \triangleq \min\{f(X), f(Y)\}$,
on \karate and \physicians.
The obtained sequences were found to be all the \emph{shortest}.
\cref{fig:msreco} displays
the influence spread of sets in each reconfiguration sequence.
On \karate, the A* algorithm with $\theta=v$ and \cref{alg:swap} found an optimal reconfiguration sequence of value $v$.
We can observe that
the intermediate sets for \cref{alg:swap} were more influential than those for the A* algorithm.
\cref{fig:karate-vis} draws \karate network, where each vertex is colored according to its probability of being activated by
$X$, the fourth subset $S^{(4)}$ returned by \cref{alg:swap}, or $Y$.
We can see that
many vertices are more likely to be activated by $S^{(4)}$ than by $X$ or $Y$, making it easy to transform from $X$ to $Y$.
(See \cref{app:table}
for the entire reconfiguration sequence returned by \cref{alg:swap}.)
On \physicians, \cref{alg:swap} found
a reconfiguration sequence of value $ 84.9 \approx 0.91 v$,
which is still drastically better than $\frac{1}{2} v$
envisioned from \cref{thm:msr-approx}, though the A* algorithm's sequence has value $v$.
We finally report the number of oracle calls for an influence function:
On \karate,
\cref{alg:swap} made $72$ calls and the A* algorithm made $1{,}428$ calls;
on \physicians,
\cref{alg:swap} made $272$ calls and the A* algorithm made $53{,}987$ calls.
(We stress that we do not report actual running time as it heavily depends on implementations of an unbiased estimator \cite{ohsaka2020solution} and
scalability against large instances is beyond the scope of this paper.)
In summary,
\cref{alg:swap} produced
a reconfiguration sequence of reasonable quality
by making fewer oracle calls.

\subsection{MAP Inference Reconfiguration}
\label{subsec:map}
\subsubsection{\textbf{Problem Description}}
We formulate the reconfiguration of maximum a posteriori (MAP) inference on determinantal point process as \USRTJAR.
\emph{Determinantal point processes} (DPPs) \cite{macchi1975coincidence,borodin2005eynard} are
a probabilistic model on the power set $2^{[n]}$, which captures negative correlations among objects. 
Given a Gram matrix $\mat{A} \in \bbR^{n \times n}$,
a DPP defines the probability mass of each subset $S \subseteq [n]$ to be proportional to $\det(\mat{A}_S)$.
Seeking a subset with the maximum determinant
(i.e., $\max_{S \subseteq [n]} \det(\mat{A}_S)$),
which is equivalent to \emph{MAP inference} \cite{gillenwater2012near},
finds applications in recommendation and summarization \cite{wilhelm2018practical,yao2016tweet,kulesza2012determinantal}.
Since $\log \det(\mat{A}_S)$ as a set function in $S$ is submodular,
the reconfiguration counterpart of MAP inference is \USR, whose motivation was explained in \cref{sec:intro}.

\begin{figure}
    \centering
    \includegraphics[width=1\hsize]{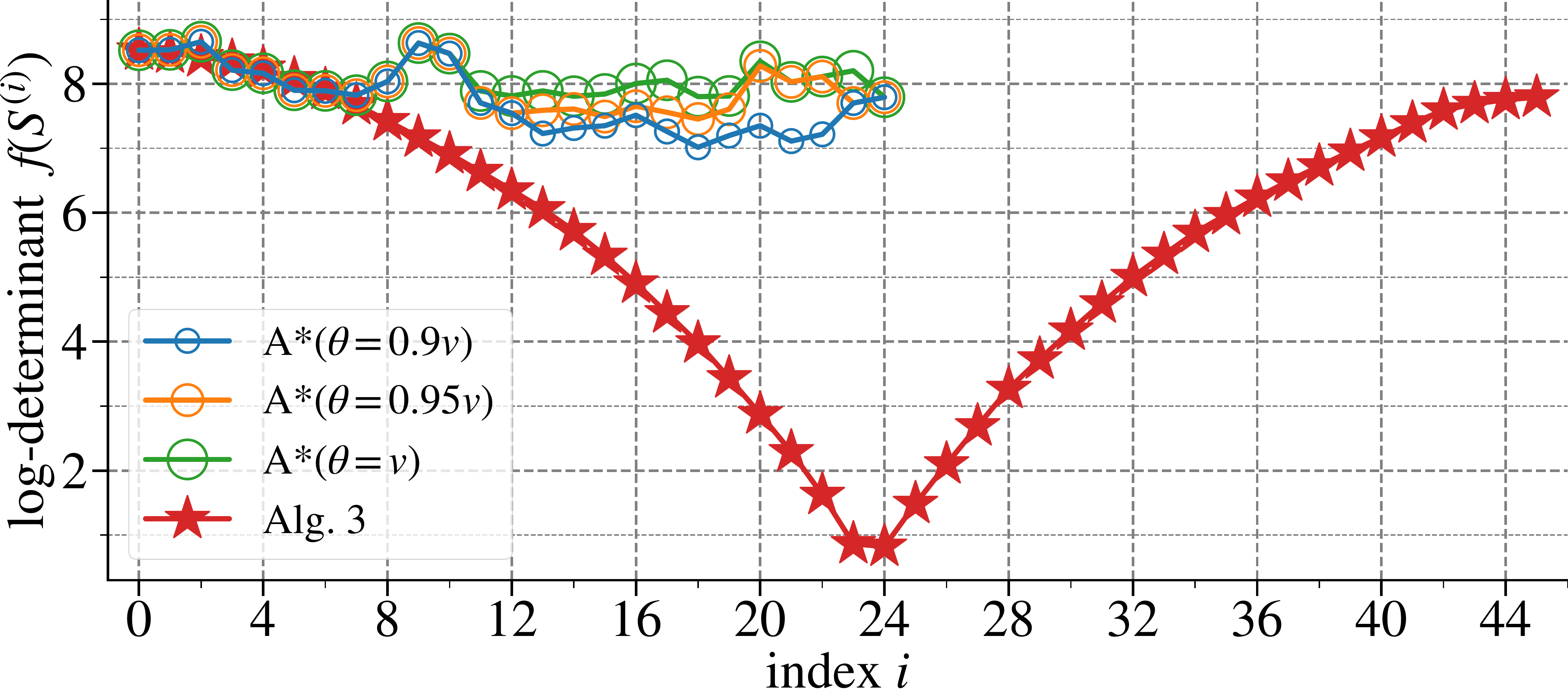}
    \caption{Log-determinant of each set in the reconfiguration sequences returned by A* algorithm ($\theta=0.9v,0.95v,v$) and \cref{alg:usr} on \texttt{MovieLens 1M}, where $v = \min\{f(X),f(Y)\}$.}
    \label{fig:usreco}
\end{figure}

\subsubsection{\textbf{Setup}}
We prepare a Gram matrix $\mat{A}$ and a pair of input sets $X$ and $Y$.
We used \texttt{MovieLens 1M}\footnote{\url{https://grouplens.org/datasets/movielens/1m/}} \cite{harper2015movielens},
which consists of
$1$ million ratings on $3{,}900$ movies from $6{,}040$ users of
an online movie recommendation website MovieLens.\footnote{\url{http://movielens.org/}}
We first selected $n = 207$ movies with at least $1{,}000$ ratings and
$m = 839$ users who rated at least $100$ movies,
resulting in an $n \times m$ movie-user rating matrix.
We then ran Nonnegative Matrix Factorization \cite{boutsidis2008svd} with dimension $64$ to extract
a feature vector $\vec{\phi}_i \in \bbR^{64}$
with $\|\vec{\phi}_i\|_2 = 1$
for each movie $i \in [n]$.
The Gram matrix $\mat{A} \in \bbR^{n \times n}$ is constructed as
$A_{i,j} = \langle 2^{r_i - 4} \vec{\phi}_i, 2^{r_j-4} \vec{\phi}_j \rangle$ for all $i,j \in [n]$,
where $r_i$ is an average rating of movie $i$ between $[1, 5]$.
Since
$\det(\mat{A}_S)$ is equal to
$(\prod_{i \in S} 2^{r_i - 4})^2$ times the square volume of the parallelepiped spanned by $\{\vec{\phi}_i\}_{i \in S}$ \cite{kulesza2012determinantal},
movies in a subset of large determinant are expected to be
\emph{highly-rated} and of \emph{diverse} genres.
An input submodular function $f$ is defined as $f(S) \triangleq \log \det(\mat{A}_S)$ for $S \subseteq [n]$.
We created $X$ and $Y$ in a similar manner to the first experiment:
We computed $X_i$ and $Y_i$ for $i \geq 1$ according to \cref{eq:exp-subsets}
until no further selection is possible and
extracted those with the largest determinant, resulting is that
$X \triangleq X_{24}$ and
$Y \triangleq Y_{22}$,
where $f(X) = 8.52$ and $f(Y) = 7.79$.

\subsubsection{\textbf{Results}}
We ran \cref{alg:usr} and the A* algorithm with
$\theta = 0.9 v, 0.95 v, v$, where
$v \triangleq \min\{f(X), f(Y)\}$.
The A* algorithm produced reconfiguration sequences of length $24$ while \cref{alg:usr} produced a reconfiguration sequence of length $45$.
\cref{fig:usreco} plots the log-determinant of sets in each reconfiguration sequence.
The A* algorithm with $\theta = v$
was able to find an optimal reconfiguration sequence $\calS^*$.
$22$ of the $24$ steps in $\calS^*$ were found to be \TJ steps,
which is quite different from the behavior of \cref{alg:usr}.
One possible reason is that
log-determinant functions exhibit \emph{monotonicity}
when every eigenvalue of $\mat{A}$ is greater than $1$ \cite{sharma2015greedy};
in fact, the principal submatrix of $\mat{A}$ induced by $X$ and $Y$
has the minimum eigenvalue of $0.70$ and $0.76$, respectively,
while the minimum eigenvalue of $\mat{A}$ was approximately $0$.
Hence, there is a sequence of \TJ steps that preserves the log-determinant large.
As opposed to the success of \cref{alg:swap} for \MSR, \cref{alg:usr}'s value was $0.823 \approx 0.11 v$, which is nine times smaller than the optimal value $v$.
This result is easily expected from the mechanism of \cref{alg:usr}, which includes \emph{singletons} (i.e., $\{x_1\}$ and $\{y_1\}$)
into the output sequence.
The number of oracle calls for a log-determinant function was $553$ for \cref{alg:usr} and $105{,}412$ for the A* algorithm.\footnote{
Again, we do not report actual running time, which is severely affected by implementation of determinant computation \cite{chen2018fast}.}
We conclude that \cref{alg:usr} consumes fewer oracle calls but further development on approximation algorithms for \MUSRTJAR is required.

\section{Conclusion and Open Questions}
We established an initial study on submodular reconfiguration problems, including
intractability, (in)approximability, and numerical results.
We conclude this paper with two open questions.

\begin{itemize}
\item Can we devise an approximation algorithm for \MUSRTAR?
\item Can the approximation factors in \cref{sec:approx} be made tight? We conjecture
an $\bigO(1)$-factor approximability for \MUSRTJAR.
\end{itemize}

\clearpage
\bibliographystyle{ACM-Reference-Format}
\bibliography{core,infmax}

\clearpage

\appendix%
\section{Missing Proofs}
\label{app:proofs}
\begin{proof}[Proof of \cref{obs:pspace}]
It is known \cite{heuvel13complexity} that 
a reachability problem defined in the reconfiguration framework
is in \NPSPACE if the following assumptions hold:
\begin{itemize}
    \item[\textbf{1.}]given a possible solution, we can determine whether it is feasible in polynomial time;
    \item[\textbf{2.}]given two feasible solutions, we can decide if there is a reconfiguration step from one to the other in polynomial time.
\end{itemize}
It is easy to see that \cref{prb:msr,,prb:usr-tar,,prb:usr-tjar} meet these assumptions.
By Savitch's theorem \cite{savitch1970relationships},
we have that \PSPACE~$=$~\NPSPACE, which completes the proof.
\end{proof}

To prove \PSPACE-hardness of \MVCR (\cref{lem:rvcr-pspace}),
we use a reduction from \SATR \cite{gopalan2009connectivity}.
Given a \emph{3-conjunctive normal form (3-CNF) formula} $\phi$,
of which each clause contains at most three literals\footnote{
Without loss of generality, we can assume that no clause contains both positive and negative literals of the same variable.}
(e.g., $\phi = (x_1 \vee \overline{x_2} \vee x_3) \wedge (x_2 \vee \overline{x_3} \vee x_4) \wedge (\overline{x_1} \vee x_3 \vee \overline{x_4})$),
\prb{3-SAT} asks to decide if there exists a truth assignment $\vec{\sigma}$ for the variables of $\phi$ that satisfies all clauses of $\phi$
(e.g., $\vec{\sigma}(x_1) = \vec{\sigma}(x_2) = \True$ and $\vec{\sigma}(x_3) = \vec{\sigma}(x_4) =\False$).
\SATR is defined as:

\begin{problem}[\SATR \cite{gopalan2009connectivity}]
\label{prb:satr}
Given a 3-CNF formula $\phi$ and
two satisfying truth assignments $\vec{\sigma}^x$ and $\vec{\sigma}^y$ of $\phi$,
determine whether there exists a sequence of satisfying truth assignments of $\phi$ from $\vec{\sigma}^x$ to $\vec{\sigma}^y$,
$\langle \vec{\sigma}^{(0)} = \vec{\sigma}^x, \vec{\sigma}^{(1)}, \ldots, \vec{\sigma}^{(\ell)}= \vec{\sigma}^y \rangle$,
such that
each truth assignment is obtained from the previous one by a single variable flip; i.e.,
they differ in exactly one variable.
\end{problem}
\prb{3-SAT} is widely known to be \NP-complete \cite{cook1971complexity,levin1973universal} while
\SATR is \PSPACE-complete \cite{gopalan2009connectivity}.

\begin{proof}[Proof of \cref{lem:rvcr-pspace}]
The proof mostly follows \cite{ito2011complexity}.
We show a polynomial-time reduction from \SATR.
Suppose we are given
a 3-CNF formula $\phi$ with $n$ variables $x_1, \ldots, x_n$ and $m$ clauses $c_1, \ldots, c_m$ and
two satisfying truth assignments $\vec{\sigma}^x$ and $\vec{\sigma}^y$ of $\phi$.
Starting with an empty graph,
we construct a graph $G_\phi$ in polynomial time according to \cite[Proof of Theorem 2]{ito2011complexity}:
\begin{itemize}
\item \textbf{Step 1.}~for each variable $x_i$ in $\phi$, we add an edge to $G_\phi$,
the endpoints of which are labeled $x_i$ and $\overline{x}_i$;
\item \textbf{Step 2.}~for each clause $c_j$ in $\phi$, we add a clique of size $|c_j|$ to $G_\phi$,
each vertex in which corresponds to a literal in $c_j$;
\item \textbf{Step 3.}~we connect between two vertices in different components by an edge
if they correspond to opposite literals of the same variable,
e.g., $x_i$ and $\overline{x}_i$.
\end{itemize}
It is proven \cite{ito2011complexity} that
$G_\phi$ has a maximum independent set\footnote{
An \emph{independent set} is a set of vertices in which no pair of two vertices are adjacent.
} $I$ of size $m+n$
\emph{if and only if}
$\phi$ is satisfiable; here,
$n$ vertices in $I$ are chosen from the endpoints of the $n$ edges corresponding to the variables of $\phi$, and
$m$ vertices in $I$ are chosen from the $m$ cliques corresponding to the clauses of $\phi$.
Using such $I$,
we can uniquely construct a satisfying truth assignment $\vec{\sigma}_I$ of $\phi$, which
assigns $\True$ (resp.~$\False$) to $x_i$ if $I$ includes the endpoint labeled $x_i$ (resp.~$\overline{x_i}$).
On the other hand, for a fixed truth satisfying assignment $\vec{\sigma}$,
there may be exponentially many maximum independent sets $I$ such that $\vec{\sigma}_I = \vec{\sigma}$.
Observe now the following facts for $G_\phi$
for which $\phi$ is satisfiable:
\begin{itemize}
    \item For any two satisfying truth assignments $\vec{\sigma}_1$ and $\vec{\sigma}_2$ that differ in exactly one variable,
    there exist two maximum independent sets $I_1$ and $I_2$
    such that
    $I_1$ is obtained from $I_2$ by a single \TJ step,
    $\vec{\sigma}_{I_1} = \vec{\sigma}_1$ and $\vec{\sigma}_{I_2} = \vec{\sigma}_2$.
    \item
    For any two maximum independent sets $I_1$ and $I_2$ corresponding to the same satisfying truth assignments (i.e., $\vec{\sigma}_{I_1} = \vec{\sigma}_{I_2}$),
    there exists a sequence of maximum independent sets corresponding to the same satisfying truth assignment from $I_1$ to $I_2$
    under \TJ.
\end{itemize}

We now translate the above discussion into the language of \emph{vertex cover}.
Because a vertex set $C \subseteq V(G_\phi)$ is a minimum vertex cover of $G_\phi$ \emph{if and only if} $V(G_\phi) \setminus C$ is a maximum independent set of $G_\phi$,
$G_\phi$ has a minimum vertex cover $C$ of size $|V(G_\phi)|-m-n$
\emph{if and only if} $\phi$ is satisfiable;
we can uniquely construct a satisfying truth assignment $\vec{\sigma}_{V(G_\phi) \setminus C}$.
Consequently, for any two minimum vertex covers $C^x$ and $C^y$ of size $|V(G_\phi)|-m-n$ such that
$\vec{\sigma}_{V(G_\phi) \setminus C^x} = \vec{\sigma}^x$ and $\vec{\sigma}_{V(G_\phi) \setminus C^y} = \vec{\sigma}^y$,
there exists a sequence of satisfying truth assignments
from $\vec{\sigma}^x$ to $\vec{\sigma}^y$ that
meets the specification for \SATR
\emph{if and only if}
there exists a sequence of minimum vertex covers 
from $C^x$ to $C^y$ under \TJ.
Such vertex covers $C^x$ and $C^y$ can be found in polynomial time, which completes the reduction from \SATR to \MSR.
\end{proof}

\begin{proof}[Proof of \cref{thm:usr-tjar-pspace}]
We demonstrate a polynomial-time reduction from
\cref{prb:rvcr}.
Given a graph $G=(V,E)$ and two minimum vertex covers $C^x$ and $C^y$ of size $k$,
we define a submodular function $f: 2^{V} \to \bbR_+$ such that
$f(S)$ is ``the number of edges in $E$ that are incident to $S$ \emph{minus}
$\frac{1}{2}(n - |S|)$,'' where $n \triangleq |V|$.
Consider \USRTJAR defined by
$f$, $C^x$, $C^y$, and
a threshold $\theta \triangleq |E|-\frac{k}{2}+\frac{n}{2}$.
It turns out that any reconfiguration sequence from $C^x$ to $C^y$ does not include a vertex set of size either $k-1$ or $k+1$---that is, we can only apply  \TJ steps---in the following case analysis on $f(S)$:
\begin{itemize}
    \item[\textbf{1.}]if $|S|=k-1$ ($S$ cannot be a vertex cover): $f(S) \leq |E|-\frac{k}{2}+\frac{n}{2}-\frac{1}{2}$;
    \item[\textbf{2.}]if $|S|=k$ and $S$ is a vertex cover: $f(S) = |E|-\frac{k}{2}+\frac{n}{2}$;
    \item[\textbf{3.}]if $|S|=k$ and $S$ is not a vertex cover: $f(S) \leq |E|-\frac{k}{2}+\frac{n}{2}-1$;
    \item[\textbf{4.}]if $|S|=k+1$ ($S$ may be a vertex cover): $f(S) \leq |E|-\frac{k}{2}+\frac{n}{2}-\frac{1}{2}$.
\end{itemize}
Therefore, a reconfiguration sequence on the \MVCR instance is
a reconfiguration sequence on the \USRTJAR instance, and vice versa,
which completes the reduction;
the \PSPACE-hardness follows from \cref{lem:rvcr-pspace}.
\end{proof}

\begin{proof}[Proof for $(1-\kappa)^2$-approximation in \cref{thm:msr-approx}]
We reuse the notations $R, X', Y', k', x_i, y_i, X_i, Y_i$
from the proof of \cref{thm:msr-approx} in the main body.
We denote by $\kappa$ the total curvature of $f$; note that the residual $f_R$ has a total curvature not more than $\kappa$.

Showing that $f(\calS) \geq (1-\kappa)^2\min\{f(X),f(Y)\}$ is sufficient.
For each $i \in [k']$, we denote
$\Delta x_i \triangleq f_R(X_i) - f_R(X_{i-1})$ and
$\Delta y_i \triangleq f_R(Y_i) - f_R(Y_{i-1})$.
Note that $\Delta x_i$ and $\Delta y_i$ are monotonically nonincreasing in $i$ due to \cref{eq:greedy-ineq}; i.e.,
$\Delta x_1 \geq \Delta x_2 \geq \cdots \geq \Delta x_{k'}$ and
$\Delta y_1 \geq \Delta y_2 \geq \cdots \geq \Delta y_{k'}$.
We define a set function $\overline{f_R}: 2^{[n] \setminus R} \to \bbR_+$ such that, for each $S \subseteq [n]\setminus R$,
$
    \overline{f_R}(S) \triangleq \sum_{e \in S} f_R(\{e\}).
$
Note that $\overline{f_R}$ is a monotone modular function, and that
$\overline{f_R}(S)$ gives an upper bound of $f_R(S)$.
Moreover, $\overline{f_R}$ gives a $(1-\kappa)$-factor approximation to $f_R$ (e.g., \cite[Lemma 2.1]{iyer2013curvature}); i.e.,
\begin{align}
\label{eq:modular}
    (1-\kappa)\overline{f_R}(S) \leq f_R(S) \leq \overline{f_R}(S), \text{ for all } S \subseteq [n] \setminus R.
\end{align}
We will bound $\overline{f_R}(X_{k'-i} \uplus Y_i)$ from below for each $i \in [k'-1]$ in a case analysis.
We have two cases to consider:
\begin{itemize}
    \item[\textbf{1.}]$\Delta y_i \geq \Delta x_{k'-i+1}$.
    We then have that
    $\Delta y_1 + \cdots + \Delta y_i \geq \Delta x_{k'-i+1} + \cdots + \Delta x_k$.
    By adding $\sum_{1 \leq j \leq k'-i} \Delta x_j$ to both sides, we obtain that
    \begin{align*}
        \sum_{1 \leq j \leq i} \Delta y_j + \sum_{1 \leq j \leq k'-i} \Delta x_j
        & \geq \sum_{k'-i+1 \leq j \leq k'} \Delta x_j + \sum_{1 \leq j \leq k'-i} \Delta x_j \\
        \Rightarrow f_R(Y_i) + f_R(X_{k'-i}) & \geq f_R(X_{k'}) + f_R(\emptyset).
    \end{align*}
    Simple calculation using \cref{eq:modular} yields that
    $
        \overline{f_R}(Y_i \uplus X_{k'-i}) = \overline{f_R}(Y_i) + \overline{f_R}(X_{k'-i}) \geq (1-\kappa)\overline{f_R}(X'),
    $
    where we note that $\overline{f_R}(\emptyset) = 0$.
    \item[\textbf{2.}]$\Delta y_i \leq \Delta x_{k'-i+1}$.
    We then have that
    $\Delta x_1 + \cdots + \Delta x_{k'-i} \geq \Delta y_{i+1} + \cdots + \Delta y_{k'}$.
    By adding $\sum_{1 \leq j \leq i} \Delta y_j$ to both sides and using \cref{eq:modular}, we obtain the following:
    $
        \overline{f_R}(Y_i \uplus X_{k'-i}) = \overline{f_R}(Y_i) + \overline{f_R}(X_{k'-i}) \geq (1-\kappa)\overline{f_R}(Y').
    $
\end{itemize}
We thus have that, in either case, 
\begin{align}
\label{eq:curvature-ineq}
    \overline{f_R}(X_{k'-i} \uplus Y_i) \geq (1-\kappa) \min\{\overline{f_R}(X'), \overline{f_R}(Y')\}.
\end{align}
Observing that \cref{eq:curvature-ineq} is true even if $i=0,k'$,
we bound the value $f(\calS)$ of the resulting reconfiguration sequence $\calS$ as follows:
\begin{align*}
    f(\calS) & = \min_{0 \leq i \leq k'} f_R(X_{k'-i} \uplus Y_i) + f(R) \\
    & \geq \min_{0 \leq i \leq k'} (1-\kappa)\overline{f_R}(X_{k'-i} \uplus Y_i) + f(R) \\
    & \geq (1-\kappa)^2 \min\{ \overline{f_R}(X'), \overline{f_R}(Y') \} + f(R) \\
    & \geq (1-\kappa)^2 \min\{ f(X), f(Y) \}. \qedhere
\end{align*}
\end{proof}

\begin{proof}[Proof of \cref{obs:msr-3-4}]
We explicitly construct such an instance that meets the specification.
Define $n \triangleq 5$, $\Sigma \triangleq \{a,b,c,d\}$,
$V_1 = \{a,b\}$, $V_2 = \{c,d\}$, $V_3 = \{a,c\}$, $V_4 = \{b,d\}$, and $V_5 = \Sigma$.
We then define a coverage function $f: 2^{[n]} \to \bbR_+$ such that
$
    f(S) \triangleq
    \frac{| \bigcup_{i \in S} V_i |}{ |\Sigma|}
$ for $S \subseteq [n]$.
Consider \MMSR defined by
$f$, $X \triangleq \{1,2\}$, and $Y \triangleq \{3,4\}$.
An optimal reconfiguration sequence from $X$ to $Y$ is
$\calS^* = \langle \{1,2\}, \{1,5\}, \{3,5\}, \{3,4\} \rangle$,
whose value is $f(\calS^*) = 1$.
On the other hand, when we are restricted to have subsets of
$X \cup Y = \{1,2,3,4\}$ in the output sequence,
we must touch either of
$\{1,3\}, \{1,4\}, \{2,3\}, \{2,4\}$, whose function value is $\frac{3}{4}$.
\end{proof}

\begin{proof}[Proof of \cref{obs:musrtjar}]
We construct such an instance that meets the specification.
Suppose $n$ is a positive integer divisible by $4$.
Define an edge-weighted graph $G=([n], E)$, where
$E \triangleq \{ (i, \frac{n}{2}+i) \mid i \in [\frac{n}{2}] \}$, and
the weight of edge $(i, \frac{n}{2}+i)$ is $i^{-1}$.
Let $f:2^{[n]} \to \bbR_+$ be a weighted cut function defined by $G$.
Consider \MUSRTJAR defined by $f$,
$X \triangleq [\frac{n}{2}]$, and $Y \triangleq [n] \setminus S$.
\cref{alg:usr} produces a reconfiguration sequence of value $1$.
On the other hand, \cref{alg:swap} returns a reconfiguration sequence includes
$\{1, \ldots, \frac{n}{4}\} \uplus \{\frac{n}{2}+1, \ldots, \frac{n}{2}+\frac{n}{4}\}$, whose cut value is $0$.
\end{proof}

\begin{proof}[Proof of \cref{obs:usr-tar-approx}]
We construct such an instance that meets the specification.
We define a submodular set function $f: 2^{[2]} \to \bbR_+$ as
$f(\emptyset) = f(\{1,2\}) = 0$ and $f(\{1\}) = f(\{2\}) = 1$.
Consider \MUSRTAR defined by $f$, $X \triangleq \{1\}$, and $Y \triangleq \{2\}$.
Since we can use \TAR steps only,
any reconfiguration sequence $\calS$ from $X$ to $Y$ (including the optimal one) must pass through at least either one of
$\emptyset$ or $\{1,2\}$; thus, $f(\calS)$ must be $0$.
\end{proof}

\begin{proof}[Proof of \cref{thm:msr-inapprox}]
We show that to solve \MVCR \emph{exactly}, a $(1-\frac{1+\epsilon}{n^2})$-approximation algorithm for \MMSR is sufficient.
Recall that given a graph $G=(V,E)$ and two minimum vertex covers $C^x$ and $C^y$,
the polynomial-time reduction introduced in the proof of \cref{thm:msr-pspace} constructs
an instance $f, C^x, C^y$ of \MMSR, for which
an optimal reconfiguration sequence $\calS^*$ satisfies that
$f(\calS^*) = |E|$ if the answer to the \MVCR instance is ``yes'' and
$f(\calS^*) \leq |E|-1$ otherwise.
To distinguish the two cases,
it is sufficient to approximate the optimal value of \MMSR within a factor of
$\frac{|E|-1+\epsilon}{|E|} \leq 1 - \frac{1+\epsilon}{n^2}$ for any $\epsilon > 0$,
completing the proof.
\end{proof}

\begin{proof}[Proof of \cref{thm:usr-tar-inapprox}]
The proof is almost the same as that of \cref{thm:usr-tjar-inapprox}.
We only need to claim that
for any $S \subseteq [n]$,
we can construct a reconfiguration sequence $\calS$ whose value is 
$g(\calS) = \Upsilon + f(S)$ using \emph{only} \TAR steps:
an example of such a sequence is
$\langle X, \cdots $ adding elements of $S$ one by one $\cdots, X \cup S, \{x_1\} \cup S, \{x_1,y_1\} \cup S, \{y_1\} \cup S, Y \cup S, \cdots $
removing elements of $S$ one by one $\cdots, Y \rangle$.
\end{proof}

\section{A* Search Algorithm}
\label{app:astar}

\cref{alg:astar} describes an A* search algorithm for \MSR and \USR.
In A* algorithms \cite{hart1968formal},
we have a table $g$ for storing
the minimum number of reconfiguration steps required to transform from $X$ to each set $S$, and
a heuristic function $h: 2^{[n]} \to \bbR$ for underestimating
the number of reconfiguration steps required to transform from each set $S$ to $Y$.
For example, $h(S) = \frac{|S \setminus Y| + |Y \setminus S|}{2}$ under \TJ and
$h(S) = \max(|S \setminus Y|, |Y \setminus S|)$ under \TJAR,
which are both \emph{admissible} and \emph{consistent} \cite{pearl1984heuristics}.
We continue the iterations,
which pop a set $S$ with minimum $g[S] + h(S)$ and
explore each of the adjacent feasible sets $T$,
until we found $T=Y$ or no further expansion is possible.
Since two or more tie sets may have the same score,
we used a Last-In-First-Out policy \cite{asai2016tiebreaking}.
Note that \cref{alg:astar} may require exponential time in the worst case.
\begin{algorithm}[tbp]
\caption{A* algorithm for \MSR and \USR.}
\label{alg:astar}
\footnotesize
\begin{algorithmic}[1]
    \Require
    submodular function $f: 2^{[n]} \to \bbR_+$,
    two sets $X, Y$,
    threshold $\theta$,
    heuristic function $h: 2^{[n]} \to \bbR_+$.
    \State initialize priority queue $\OPEN$, and $\CLOSE \leftarrow \emptyset$.
    \State declare empty hash table $g$, and push $X$ with score $h(X)$ into $\OPEN$.
    \While{$\OPEN$ is not empty}
        \State pop set $S$ with minimum score from $\OPEN$ \textbf{and} push $S$ into $\CLOSE$.
        \State \textbf{if} $S = Y$ \textbf{then}:
        \textbf{return} reconfiguration sequence constructed using $\pi$.
        \ForAll{set $T$ adjacent to $S$ such that $f(T) \geq \theta$}
            \If{$T \in \OPEN$ \textbf{and} $g[S] + 1 < g[T]$} \Comment{reinsert node.}
                    \State $g[T] \leftarrow g[S]+1$ \textbf{and} $\pi[T] \leftarrow S$.
                    \State remove $T$ from $\OPEN$ \textbf{and} push $T$ with score $g[T] + h(T)$ into $\OPEN$.
            \ElsIf{$T \in \CLOSE$ \textbf{and} $g[S] + 1 < g[T]$} \Comment{reopen node.}
                    \State $g[T] \leftarrow g[S] + 1$ \textbf{and} $\pi[T] \leftarrow S$.
                    \State remove $T$ from $\CLOSE$ \textbf{and} push $T$ with score $g[T] + h(T)$ into $\OPEN$.
            \ElsIf{$T \not\in \OPEN$ \textbf{and} $T \not\in \CLOSE$} \Comment{open node.}
                \State $g[T] \leftarrow g[S] + 1$ \textbf{and} $\pi[T] \leftarrow S $.
                \State push $T$ with score $g[T]+h(T)$ into $\OPEN$.
            \EndIf
        \EndFor
    \EndWhile
\end{algorithmic}
\end{algorithm}

\section{Additional Result}
\label{app:table}

\begin{table}[htbp]
\centering
\caption{Sequence returned by \cref{alg:swap} on \karate.}
\label{tab:my_label}
\scriptsize
\begin{tabular}{c|l|c}
\toprule
index $i$ & seed set $S^{(i)}$ & influence $f(S^{(i)})$ \\
\midrule
0 & $\{34, 2, 7, 32, 4, 11, 30, 9\}$ & 23.2 \\
1 & $\{34, 2, 7, 32, 4, 11, 30, 1\}$ & 24.3 \\
2 & $\{34, 2, 7, 32, 4, 11, 33, 1\}$ & 25.2 \\
3 & $\{34, 2, 7, 32, 4, 3, 33, 1\}$ & 25.5 \\
4 & $\{34, 2, 7, 32, 6, 3, 33, 1\}$ & 25.5 \\
5 & $\{34, 2, 7, 14, 6, 3, 33, 1\}$ & 25.0 \\
6 & $\{34, 2, 24, 14, 6, 3, 33, 1\}$ & 25.2 \\
7 & $\{34, 25, 24, 14, 6, 3, 33, 1\}$ & 25.0 \\
8 & $\{27, 25, 24, 14, 6, 3, 33, 1\}$ & 23.6 \\
\bottomrule
\end{tabular}
\end{table}

\end{document}